\newcommand{\definedas}{\overset{\underset{\Delta}{}}{=}}
\newcommand{\req}{\overset{\underset{!}{}}{=}}
\newcommand{\E}{\text{E}}
\newcommand{\pr}{\text{Pr}}
\newcommand{\Sa}{\mathcal{S}_{\text{a}}}
\newcommand{\Sac}{\mathcal{S}_{\text{a}}^{\complement}}
\newcommand{\e}{\text{e}}
\newcommand{\dd}{\text{d}}
\newcommand{\U}{\text{U}}
\newcommand{\argorder}{\operatornamewithlimits{argorder}} 
\newcommand{\argmin}{\operatornamewithlimits{argmin}} 
\theoremstyle{plain}
\newtheorem{theorem}{Theorem}
\theoremstyle{remark}
\newtheorem{remark}{Remark}
\newcommand{\thickhline}{%
    \noalign {\ifnum 0=`}\fi \hrule height 1pt
    \futurelet \reserved@a \@xhline
}
\newcolumntype{"}{@{\hskip\tabcolsep\vrule width 1pt\hskip\tabcolsep}}
\begin{document}
%
\title{Multi-user Scheduling Schemes for Simultaneous Wireless Information and Power Transfer Over Fading Channels\vspace{-0.3cm}}
\author{\IEEEauthorblockN{Rania Morsi, Diomidis S. Michalopoulos, and Robert Schober}
\vspace{-1.5cm}
\thanks{This paper was presented in part at the IEEE International Conference on Communications (ICC), Sydney 2014 \cite{Multiuser_scheduling_Morsi}.}
\thanks{Rania Morsi, Diomidis S. Michalopoulos, and Robert Schober are with the Institute of Digital Communications, Friedrich-Alexander-University Erlangen-N\"urnberg (FAU), Germany, (email: morsi@lnt.de, michalopoulos@lnt.de, schober@lnt.de).}}
\maketitle
\begin{abstract}
Radio frequency energy harvesting presents a viable solution for prolonging the lifetime of wireless communication devices. 
In this paper,  we study downlink multi-user scheduling for a time-slotted system with simultaneous wireless information and power transfer. In particular, in each time slot, a single user is scheduled to receive information, while the remaining users opportunistically harvest the ambient radio frequency energy. We devise novel online scheduling schemes in which the tradeoff between the users' ergodic rates and their average amount of harvested energy can be controlled. In particular, we modify the well-known maximum signal-to-noise ratio (SNR) and maximum normalized-SNR (N-SNR) schedulers by scheduling the user whose SNR/N-SNR has a certain ascending order (selection order) rather than the maximum one. We refer to these new schemes as order-based SNR/N-SNR scheduling and show that the lower the selection order, the higher the average amount of harvested energy in the system at the expense of a reduced ergodic sum rate. The order-based N-SNR scheduling scheme provides proportional fairness among the users in terms of both the ergodic achievable rate and the average harvested energy. Furthermore, we propose an order-based equal throughput (ET) fair scheduler, which schedules  the user having the minimum moving average throughput out of the users whose N-SNR orders fall into a given set of allowed orders. We show that this scheme provides the users with proportionally fair average harvested energies. In this context, we also derive feasibility conditions for achieving ET with the order-based ET scheduler. Using the theory of order statistics, the average per-user harvested energy and ergodic achievable rate of all proposed scheduling schemes are analyzed and obtained in closed form for independent and non-identically distributed Rayleigh, Ricean, Nakagami-$m$, and Weibull fading channels. Our closed-form analytical results are corroborated by simulations. 
\end{abstract}
\vspace{-0.5cm}

\begin{IEEEkeywords}
Wireless information and power transfer, RF energy harvesting, multi-user scheduling, fairness.
\end{IEEEkeywords}
\IEEEpeerreviewmaketitle
\vspace{-0.2cm}

\section{Introduction}
Because of the tremendous increase of the number of battery-powered wireless communication devices over the past decade, the idea of prolonging their lifetime by allowing them to harvest energy from the environment has recently drawn significant research interest. Wireless energy harvesting (EH) is particularly important for energy-constrained wireless networks, such as sensor networks. For such networks, replacing the device batteries can be costly and inconvenient in difficult-to-access environments, or even infeasible for sensors embedded inside the human body. This creates the need for utilizing renewable energy sources. However, common renewable energy resources such as solar and wind energy are uncontrollable, weather dependent, and not available indoors. On the other hand, harvesting energy from ambient or even dedicated radio frequency (RF) signals is a viable solution for supplying energy wirelessly to low-power devices \cite{powercast}. RF EH can be realized by converting the RF signal to a direct current (DC) signal with a rectenna, which is an antenna integrated with a rectifier (e.g. diode) \cite{RFtoDC2008}. The DC energy can then be used to  power battery-free devices, such as passive RF-identification (RFID) tags or to trickle charge low-power devices, such as wearable medical sensors. 

The fact that RF signals can transport both information and energy motivates the integration of RF EH into wireless communication systems. To this end, simultaneous wireless information and power transfer (SWIPT) was proposed in \cite{Varshney2008,Shannon_meets_tesla_Grover2010}, where it was shown that there exists a fundamental tradeoff between information rate and power transfer. This tradeoff can be characterized by the boundary of the so-called rate-energy (R-E) region \cite{MIMO_Broadcasting_Zhang2011_Journal}. However, in \cite{Varshney2008,Shannon_meets_tesla_Grover2010}, receivers were assumed to be able to harvest energy from a signal that has already been used for information decoding (ID), which is not possible with current technology due to practical circuit limitations \cite{WIPT_Architecture_Rui_Zhang_2012}. In \cite{MIMO_Broadcasting_Zhang2011_Journal}, two practical receiver architectures were proposed, where the receiver either switches in time between EH and ID, or splits the received signal to use one portion for ID and the remaining portion for EH.

Recently, multi-user systems employing SWIPT were studied in an effort to make SWIPT suitable for application in practical networks. A SWIPT system comprising two users with multiple antennas was studied in \cite{MIMO_Broadcasting_Zhang2011_Journal} and \cite{EH_ID_Scenarios_Clerckx} for broadcast and interference channels, respectively, where optimal transmission strategies were derived. Furthermore, multi-user multiple-input single-output SWIPT systems were studied in \cite{Multiuser_MISO_beamforming2013}, where the authors derived the optimal beamforming design that maximizes the total energy harvested by the EH receivers under signal-to-interference-plus-noise ratio constraints at the ID receivers. In \cite{Multiuser_OFDM_Kwan_Schober_Journal}, a multi-user scheduling algorithm was designed for a downlink orthogonal frequency division multiplexing (OFDM) system, where the system energy efficiency in bit/Joule was maximized for a minimum required sum rate and a minimum required energy harvested by the users.  Moreover, the authors of \cite{Throughput_Maximization_for_WPCN_Zhang2013} considered a multi-user time-division multiple-access system with energy and information transfer in the downlink and the uplink, respectively. The optimal downlink and uplink time allocation was derived with the objective of achieving maximum sum throughput or equal throughput under a total time constraint. 

For information-only transfer systems, multi-user scheduling schemes that exploit the independent and time-varying multipath fading of the users' channels for creating multi-user diversity (MUD) have been extensively studied \cite{Performance_Analysis_MUD_Alouini_2004, Unified_Scheduling_approach}. With such schemes, the user having the most favorable channel conditions is opportunistically scheduled to transmit/receive over the \emph{entire} time slot. For example, in maximum-throughput scheduling (maximum signal-to-noise ratio (SNR) scheduling) \cite{Performance_Analysis_MUD_Alouini_2004}, the user having the maximum SNR is scheduled and thus the sum rate is maximized. However, users with poor channel conditions (high path loss) may be deprived from gaining access to the channel. To avoid this disadvantage, maximum \emph{normalized} SNR (N-SNR) scheduling \cite{Performance_Analysis_MUD_Alouini_2004} schedules the user having the maximum N-SNR (normalized to its own average value) and thus maximizes the users' rates while ensuring that the rate of each user is proportional to the user's channel quality, providing proportional fairness. Another approach to providing fairness is to guarantee equal throughput (ET) to all users by scheduling in each time slot the user having the minimum moving average throughput \cite{equal_throughput_2009}.
To the best of the authors' knowledge, multi-user scheduling schemes that exploit MUD and provide long-term fairness among users have not been studied in the context of SWIPT so far. Hence,  in this paper, we provide a novel framework for scheduling multiple users in the downlink of a SWIPT system by modifying the aforementioned scheduling schemes such that the tradeoff between information rate and harvested energy can be controlled. The main contributions of this paper can be summarized as follows:
\begin{itemize}
\item We propose order-based SNR and proportionally-fair order-based N-SNR scheduling schemes, which are parametrized by a specific selection order. In particular, scheduling is performed by first ordering the users' SNR/N-SNR ascendingly and then scheduling for information transfer the user whose SNR/N-SNR order is the same as the selection order. The remaining users opportunistically harvest the ambient RF energy. Thereby, a small selection order implies that the good states of the time-varying channel are utilized for EH rather than for ID. This leads to a larger average amount of harvested energy at the expense of a reduction in the ergodic sum rate. We show that with the choice of the selection order, the R-E tradeoff can be controlled.
Different from \cite{Multiuser_scheduling_Morsi}, in this paper, the unfair order-based SNR scheduling is additionally studied as it provides the maximum possible sum rate for the maximum selection order and the maximum possible total harvested energy for the minimum selection order. Thus, it provides the extreme points for the upper bound of the R-E region of any scheduling scheme.
\item We devise an order-based ET fair scheduling scheme, where in each time slot the user having the minimum moving average throughput is scheduled among a set of users whose N-SNR orders fall into a given set of allowed orders. We show that the lower the orders in this set, the larger the average amount of energy harvested by the users at the expense of a reduction in their ET. Furthermore, feasibility conditions required for the users to achieve ET with this scheme are derived.
\item The order-based SNR, N-SNR, and ET scheduling schemes are analyzed using the theory of order statistics. Different from \cite{Multiuser_scheduling_Morsi}, in this paper, we analyze the three proposed schemes for independent and non-identically distributed (i.n.d.) Nakagami-$m$ and Weibull fading channels, in addition to Ricean and Rayleigh fading channels. In particular, closed-form expressions for the per-user average harvested energy and the per-user ergodic achievable rate are provided for the three proposed scheduling schemes and the considered fading models. For Ricean fading, we use a tight approximation for the Marcum Q-function which effectively transforms a Ricean fading channel into an equivalent Weibull fading channel and facilitates obtaining closed-form performance results for Ricean fading.
\end{itemize}
The rest of this paper is organized as follows. Section \ref{s:system_model} presents the overall system model, the adopted EH receiver model, and the considered fading channel models. This section also introduces two baseline scheduling schemes, namely round-robin (RR) and conventional ET scheduling, and analyzes their per-user average harvested energies and ergodic rates. Sections \ref{s:order_based_SNR}, \ref{s:order_based_NSNR_scheme}, and \ref{s:controllable_ET_scheduling} introduce respectively the order-based SNR, the proportionally-fair order-based N-SNR, and the order-based ET scheduling algorithms, and provide analytical results and closed-form expressions for the corresponding per-user average harvested energies and ergodic achievable rates. Additionally, feasibility conditions that have to be satisfied for the order-based ET scheduling scheme to achieve ET for all users are given in Section \ref{s:controllable_ET_scheduling}. Numerical and simulation results are provided in Section \ref{s:simulation_results}. Finally, Section \ref{s:conclusion} concludes the paper.

\section{System Model and Baseline Scheduling Schemes}
\label{s:system_model}
In this section, we present the multi-user SWIPT system model and the adopted EH receiver model. We also briefly review the fading channel models used for evaluating the performance of the proposed scheduling schemes. Additionally, two baseline scheduling schemes are provided to have a basis for evaluation of the proposed schedulers. However, we first introduce some notations and special functions used throughout the paper.

\subsection{Notations and Special Functions}
We use the following notations and functions throughout the paper. $\E[\cdot]$ denotes expectation. $\Gamma(s,x)$ is the upper incomplete Gamma function defined as $\Gamma (s,x)= \int_x^\infty{t^{s-1} \e^{-t} \dd t}$. $\Gamma(s)$ is the Gamma function defined as $\Gamma (s)= \int_0^\infty{t^{s-1} \e^{-t} \dd t}$ for positive non-integer $s$ and $\Gamma(s)=(s-1)!$ for positive integer $s$. $I_0(\cdot)$ is the modified Bessel function of the $1^{\text{st}}$ kind and order zero. $Q_1(a,b)$ is the first-order Marcum Q-function defined as $Q_1(a,b)=\int_b^\infty{x  \e^{-\frac{(x^2+a^2)}{2}} I_{0}(ax) \dd x}$. $G_{p,q}^{m,n}\left[z\middle|\begin{matrix} a_1,\ldots,a_p\\b_1,\ldots,b_q\end{matrix}\right]$ is the Meijer G-function defined in \cite[eq.(5)]{MeijerG1990}. $\E_1(x)=\int_1^{\infty} \frac{\e^{-tx}}{t}\quad \dd t$ is the exponential integral function of the first order. $\psi(x)$ is the digamma (psi) function defined in \cite[p.793]{prudnikov1986integrals}. $\text{C}=0.5772156649\ldots=-\psi(1)$ is the Euler constant \cite[p.787]{prudnikov1986integrals}. ${}_pF_q\left(\begin{aligned}&a_1,\ldots,a_p\\ &b_1,\ldots,b_q\end{aligned};z\right)$ is the generalized hypergeometric function defined in \cite[p.788]{prudnikov1986integrals}. $|\cdot|$ denotes the cardinality of a set. Finally, $\definedas$ stands for ``is defined as" and $\req$ stands for ``is required to be".

\subsection{System Model}
We consider a time-slotted SWIPT system that consists of one access point (AP) with a fixed power supply and $N$ user terminals (UTs) which are powered by the energy harvested from the AP downlink RF signal. Both the AP and the UTs are equipped with a single antenna. The system is studied for downlink transmission, where it is assumed that the AP always has a dedicated packet to transmit for each user. In each time slot, the AP schedules one user for information transmission, while the remaining idle users opportunistically harvest energy from the received signal, as shown in Fig. \ref{fig:ID_EH_system}. We assume that the receivers of the UTs employ time-switching \cite{MIMO_Broadcasting_Zhang2011_Journal}, i.e., each UT may use the received signal for either ID or EH. To this end, the AP broadcasts at the beginning of each time slot the index of the UT which is scheduled for ID, which is determined using the scheduling schemes introduced in Sections \ref{s:order_based_SNR}-\ref{s:controllable_ET_scheduling}. Moreover, we assume that all UTs are of the same type and require the same type of data (e.g., calibration, synchronization, or query signals sent in the downlink of wireless sensor networks).
\begin{figure}
\centering
\includegraphics[width=0.48\textwidth]{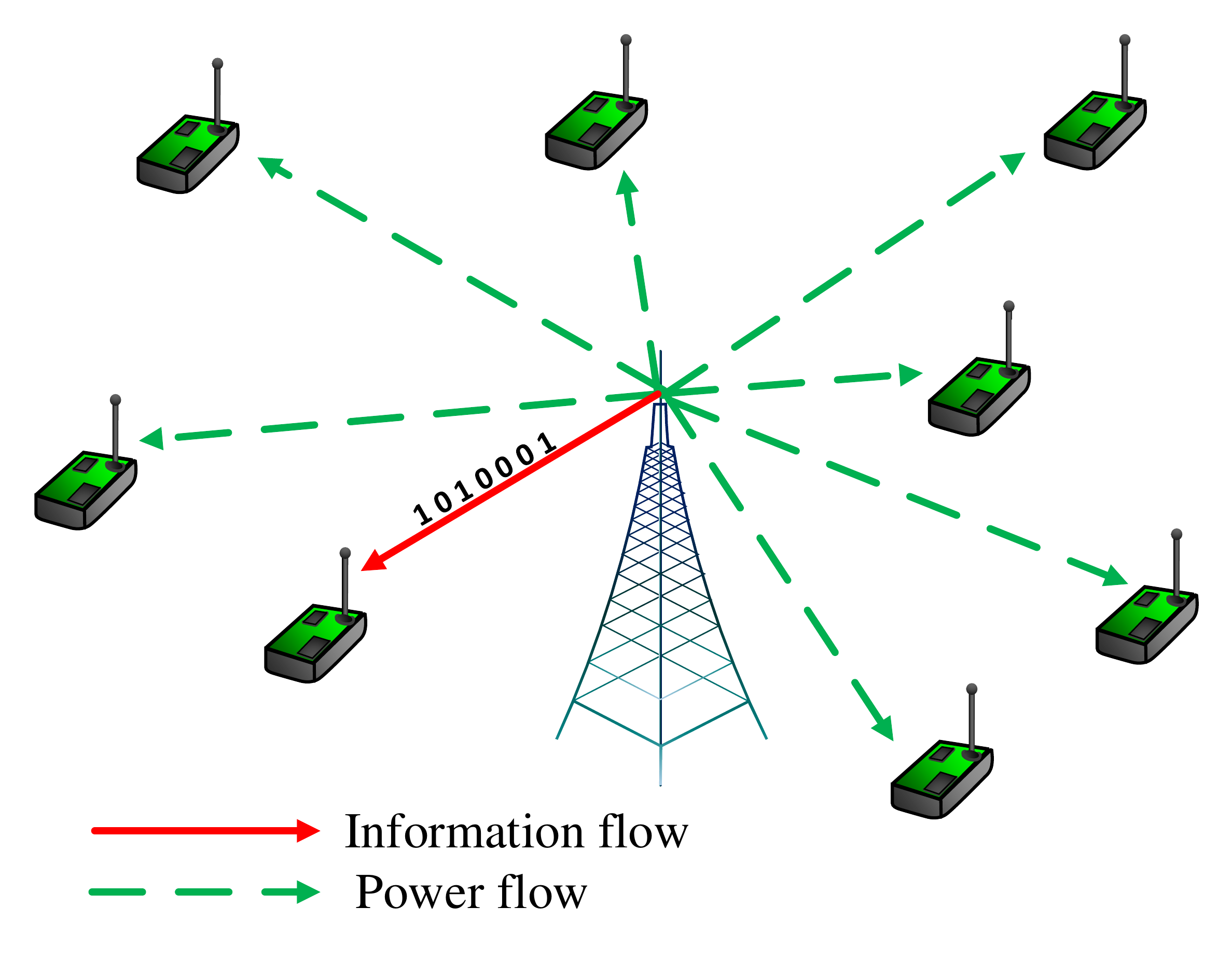}
\caption{Multi-user SWIPT system with time switching receivers. The scheduled user performs ID and the remaining users perform EH.}
\label{fig:ID_EH_system}
\end{figure}

At time slot $t$, the AP transmits an information signal to the scheduled user and the received signal at user $n$ is given by
\begin{equation}
r_n=\sqrt{P\,h_n}\e^{\text{j}\theta_n}x+z_{n},\quad \forall n\in\{1,\ldots,N\},
\label{eq:transmission}
\end{equation}
where $P$ is the constant transmit power of the AP\footnote{In this paper, we assume a fixed AP transmit power which is reasonable for simple wireless sensor networks. Power control provides an additional degree of freedom and thus may lead to a performance enhancement at the expense of a higher system complexity and a higher peak-to-average power ratio of the transmit signals. However, considering power control is out of the scope of this paper and left for future work.}, $x$ is the complex baseband information symbol assumed to be Gaussian distributed with average power normalized to 1, i.e., $\E[|x|^2]\!\!=\!\!1$, $\sqrt{h_n}$ and $\theta_n$ are respectively the amplitude and the phase of the fading coefficient of the channel from the AP to user $n$, and  $z_n$ is zero-mean complex additive white Gaussian noise with variance $\sigma^2$  impairing the received signal of user $n$.

\subsection{Energy Harvesting Receiver Model}
We adopt the EH receiver model described in \cite{WIPT_Architecture_Rui_Zhang_2012}. In this model, the average harvested power (or equivalently energy, for a unit-length time slot) is given by \cite[eq.(13)]{WIPT_Architecture_Rui_Zhang_2012}
\begin{equation}
EH=\eta h P,
\label{eq:EH}
\end{equation}
where $h$ is the channel power gain from the AP to the EH receiver and $\eta$ is the RF-to-DC conversion efficiency which ranges from 0 to 1. For current commercially available RF energy harvesters, $\eta$ can reach up to 0.7 \cite{powercast}.
\subsection{Channel Model} 
\label{ss:channel_model} 
The channels from the AP to the users are assumed to be block fading, i.e., the channel remains constant over one time slot, and changes independently from one slot to the next. The channel coefficients of the different user links are assumed to be independent and to have identical small-scale fading distributions but different path losses. This is a realistic assumption since the users may have different distances from the AP but they are in the same physical environment. We incorporate the effects of both small-scale fading and path loss into the channel power gain $h_n$ of user $n$, whose mean $\Omega_n$ is inversely proportional to the path loss (including the antenna gains).  We consider four fading channel models, namely Ricean, Nakagami-$m$, Weibull, and Rayleigh fading, i.e., $h_n$ follows respectively a non-central $\chi^2$, Gamma, Weibull, and exponential distribution. The corresponding probability density functions (pdfs) $f_{h_n}(x)$ and cumulative distribution functions (cdfs) $F_{h_n}(x)$ are given in Table \ref{tab:pdfs_cdfs_hn}. For Nakagami-$m$ fading, we assume that $m$ is an integer to facilitate the analysis.\\ 
Each considered fading model is suitable for modelling wireless power transfer in a different environment.  In particular, Nakagami-$m$ fading is a suitable model for indoor environments \cite{simon2005digital}. Weibull fading can effectively characterize the radio channel of narrow-band body area networks \cite{WBAN_Weibull_Justification}, which may solely depend on wireless power transfer when embedded inside human bodies. Rayleigh fading is a well-known special case of Nakagami-$m$ fading with $m=1$ and Weibull fading with $k=1$, cf. Table \ref{tab:pdfs_cdfs_hn}. Ricean fading may be considered as one of the most realistic channel models for short range wireless power transfer, as it includes a line of sight path. 

In order to gain analytical insight into the performance  of different scheduling schemes for Ricean fading,  we use a recently developed exponential approximation of the first-order Marcum-Q function \cite{EXP_Approx_Marcum_Q}
\begin{equation}
Q_1(a,b)\approx\e^{-\e^{\nu(a)}b^{\mu(a)}},
\label{eq:Exp_approx_Marcum_Q_function}
\end{equation}
where $\nu(a)$ and $\mu(a)$ are non-negative parameters given by\footnote{Note that for the case when $a$ is not $\ll 1$, the polynomials in (\ref{eq:mu}) and (\ref{eq:nu}) are slightly different from those in \cite[eq.(7)]{EXP_Approx_Marcum_Q}. We obtained the same optimal values for $\mu$ and $\nu$ as in \cite[Table I]{EXP_Approx_Marcum_Q}, but different fourth order coefficients of the least-square polynomial regression for $\nu(a)$ and $\mu(a)$. In fact, the polynomials in (\ref{eq:mu}) and (\ref{eq:nu}) yield a better approximation of the Marcum-Q function than the polynomials given in \cite[eq.(7)]{EXP_Approx_Marcum_Q}.}
\begin{equation}
\mu(a)= 
\begin{cases}
    2+\frac{9}{8(9\pi^2-80)a^4},& \text{if } a\ll 1\\
    2.1793-0.5916a+0.5895a^2-0.0909a^3+0.0053a^4, & \text{otherwise},
\end{cases}
\label{eq:mu}
\end{equation}
and
\begin{equation}
\nu(a)=
\begin{cases}
-\ln 2-\frac{a^2}{2}+\frac{45\pi^2+72\ln 2+36\text{\scriptsize C}-496}{64(9\pi^2-80)}a^4,& \text{if } a\ll 1\\
-0.8526+0.3504a-0.7529a^2+0.0858a^3-0.0045a^4, & \text{otherwise}.
\end{cases}
\label{eq:nu}
\end{equation}
The approximation in (\ref{eq:Exp_approx_Marcum_Q_function}) effectively transforms the Ricean fading channel into an equivalent Weibull fading channel, where $\lambda_n^k$ is replaced by $\beta_n$ and $k$ is replaced by $\mu'$, cf. Table \ref{tab:pdfs_cdfs_hn}.
\begin{table}[t]
  \caption{Pdf and cdf of the channel power gain $h_n$ of user $n$ for different fading models. $\Omega_n=\E[h_n]$ is the scale parameter of all fading distributions. $m$ and $k$ are the shape parameters of the Gamma and the Weibull distributions, respectively. $K$ is the ratio between the power in the direct path and that in the scattered paths for Ricean fading.}
\label{tab:pdfs_cdfs_hn}
\begin{tabular}{@{}llll@{}}  \toprule   
Fading Model & Pdf $f_{h_n}(x)$ & Cdf $F_{h_n}(x)$ &Parameters \\ \midrule \addlinespace[0.5em]
Nakagami-$m$ & $\frac{1}{\Gamma(m)}\lambda_n^m x^{m-1}\e^{-\lambda_n x}$
&\vspace{0.1cm}$\begin{aligned}&1-\frac{\Gamma\left(m,\lambda_n x\right)}{\Gamma(m)}\\=&1-\e^{-\lambda_n x}\sum\limits_{s=0}^{m-1}\frac{(\lambda_n x)^s}{s!}\end{aligned}$
&$\lambda_n=\frac{m}{\Omega_n}$ \\ \addlinespace[0.5em]
Weibull & $k\lambda_n^kx^{k-1}\e^{-(\lambda_nx)^k}$ & $1-\e^{-(\lambda_n x)^k}$ &\vspace{0.1cm}$\lambda_n=\frac{\Gamma\left(1+\frac{1}{k}\right)}{\Omega_n}$\\ \addlinespace[0.5em]
Ricean &$\frac{K+1}{\Omega_n} \e^{-K-\frac{(K+1)x}{\Omega_n}} I_0\left(2\sqrt{\frac{K(K+1)}{\Omega_n}x}\right)$ & $\begin{aligned}&1-Q_1\left(\sqrt{2K},\sqrt{\frac{2(K+1)x}{\Omega_n}}\right)\\&\approx 1-\e^{-\beta_nx^{\mu'}}\end{aligned}$ &\vspace{0.1cm}$\begin{aligned} \beta_n&= \e^{\nu(\sqrt{2K})}\sqrt{\frac{2(K+1)}{\Omega_n}}^{\mu(\sqrt{2K})}\\ \mu'&=\frac{\mu\left(\sqrt{2K}\right)}{2}\end{aligned}$\vspace{0.1cm}\\ \addlinespace[0.5em]
Rayleigh & $\lambda_n\e^{-\lambda_nx}$ & $1-\e^{-\lambda_n x}$ & $\lambda_n=\frac{1}{\Omega_n}$\\
\bottomrule
\end{tabular}
\end{table}
\begin{table}[!htp]
  \caption{Ergodic full-time-access achievable rate of user $n$ for different fading models \cite{Shannon_Capacity_fading_channels2005}. For Weibull fading, we use the definition $\lambda_n'= \frac{\Gamma\left(1+\frac{1}{k}\right)}{\bar{\gamma}_n}$ and parameters $a$ and $b$ are the smallest positive integers satisfying $\frac{b}{a}=k$. Function $\Delta(x,y)$ is defined as $\Delta(x,y)= \frac{y}{x},\frac{y+1}{x},\ldots,\frac{y+x-1}{x}$ \cite[p.792]{prudnikov1986integrals}.}
\label{tab:full_time_capacity}
\begin{tabular}{@{}ll@{}}  \toprule     
Fading Model & $\E[C_{\text{U}_{n,\text{f}}}]$ \\  \midrule \addlinespace[1em]
Nakagami-$m$ & $\frac{1}{\ln(2)\Gamma(m)}\left(\frac{m}{\bar{\gamma}_n}\right)^m G_{2,3}^{3,1}\left[\frac{m}{\bar{\gamma}_n}\middle|\begin{smallmatrix} &-m, &1-m \\ 0, &-m, & -m\end{smallmatrix}\right]$ \\  \addlinespace[1em] 
Weibull & $\frac{k\lambda_n^{'k}}{\ln(2)}\frac{\sqrt{a}b^{-1}}{(2\pi)^{\frac{a+2b-3}{2}}}G_{2b,a+2b}^{a+2b,b}\left[\frac{\lambda_n^{'ak}}{a^a}\middle|\begin{smallmatrix} \Delta(b,-k), &\Delta(b,1-k)\\ \Delta(a,0), &\Delta(b,-k), \Delta(b,-k)\end{smallmatrix}\right]$\\  \addlinespace[1em]  
Ricean & $\frac{(1+K)\e^{-K}}{\ln(2)\bar{\gamma}_n}\sum\limits_{i=0}^{\infty}\frac{1}{(i!)^2}\left[\frac{K(1+K)}{\bar{\gamma}_n}\right]^{i} G_{2,3}^{3,1}\left[\frac{(K+1)}{\bar{\gamma}_n}\middle|\begin{smallmatrix} &-1-i, &-i \\ 0, &-1-i, &-1-i\end{smallmatrix}\right]$\\ \addlinespace[1em]    
Rayleigh & $\frac{1}{\ln(2)}\e^{\frac{1}{\bar{\gamma}_n}}\,\E_1\left(\frac{1}{\bar{\gamma}_n}\right)$\\ 
\bottomrule
\end{tabular}
\end{table}
\subsection{Baseline Scheduling Schemes}
Next, we discuss two well-known schedulers, namely the round robin and conventional equal throughput schedulers, which will serve as baseline schemes for the newly proposed schedulers. To facilitate the comparison, we analyze the per-user ergodic rate and the average amount of harvested energy of the baseline schedulers.
\subsubsection{Round Robin (RR) Scheduling}
\label{ss:RR}
The RR scheduler grants the channels to the users in turn. Therefore, a user receives information with probability $\frac{1}{N}$, and  harvests energy with probability $1-\frac{1}{N}$. The AP only has to know the instantaneous channel of the scheduled user. Hence, user $n$ (denoted by $\U_n$) achieves an ergodic rate $\E[C_{\U_n}]$ that is $\frac{1}{N}$ times the ergodic rate $\E[C_{\text{U}_{n,\text{f}}}]$ that it would have achieved if it had full-time access to the channel. That is, 
\begin{equation}
\E[C_{\U_n}]\Big|_{\text{RR}}=\frac{1}{N} \E[C_{\text{U}_{n,\text{f}}}], 
\end{equation}
where $\E\left[C_{\text{U}_{n,\text{f}}}\right]=\int\limits_0^\infty{\log_2\left(1+\bar{\gamma}x\right)f_{h_n}(x)\,\, \dd x}$, with $\bar{\gamma}\definedas\frac{P}{\sigma^2}$, is derived in \cite{Shannon_Capacity_fading_channels2005} for the considered fading models and summarized here in Table \ref{tab:full_time_capacity}. The average harvested energy of user $n$ is $\E\left[EH_{\U_n}\right]=\left(1-\frac{1}{N}\right)\eta P \Omega_n$.
\subsubsection{Conventional Equal Throughput (ET) Scheduling}
\label{ss:ET}
Conventional ET scheduling is based on the idea that the long-term average throughput of all users can be made identical by scheduling in each time slot the user having the minimum moving average throughput \cite{equal_throughput_2009}. At time slot $t$, the conventional ET scheduler schedules for information transmission the user $n^*$ that satisfies 
 \begin{equation}
n^*=\argmin\limits_{n\in\{1,\ldots,N\}} r_n(t-1),
\label{eq:order_ET_conventional_rule}
\end{equation}
where $r_n(t-1)$ is the throughput of user $n$ averaged over previous time slots up to slot $t-1$. The throughput of the users is updated recursively as follows
\begin{equation}
r_n(t)= \left\{ \begin{array}{ll}
(1-\beta)r_n(t-1)+\beta C_n(t) &\mbox{if user $n$ is scheduled} \\
(1-\beta)r_n(t-1) &\mbox{otherwise}\\
\end{array}, \right.
\label{eq:throughput_update}
\end{equation}
where $h_n(t)$ and $C_n(t)=\log_2\left(1+\bar{\gamma}h_n(t)\right)$ are respectively the channel power gain and the achievable rate of user $n$ at time slot $t$. $\beta\in(0,1)$ is a smoothing factor which weighs new throughput values and $1-\beta$ discounts past values\footnote{$\beta$ is chosen to asymptotically vanish (e.g., $\beta=1/t$) in order to ensure that the moving average throughput $r_n(t)$ converges to its ensemble average $\E\left[C_{\U_n}\right]$ for the considered stationary fading process $h_n(t)$ (see \cite{Unified_Scheduling_approach} and the references therein).}. Using the conventional ET scheduler, the average rate of user $n$ has to fulfill
\begin{equation}
\E\left[C_{\text{U}_n}\right]=p_{n}\E\left[C_{\text{U}_{n,\text{f}}}\right]\req r,\quad\forall n=\{1,\ldots,N\},
\end{equation}
where $p_n$ is the channel access probability of user $n$, $\E\left[C_{\text{U}_{n,\text{f}}}\right]$ is the average full-time-access achievable rate of user $n$ provided in Table \ref{tab:full_time_capacity}, and $r$ is the equal throughput achieved by all users. Hence, the required access probability of user $n$ to achieve throughput $r$ is $p_n=\displaystyle r/\E\left[C_{\text{U}_{n,\text{f}}}\right]$. Since $\sum\limits_{n=1}^N p_n\!=\!1$ must hold, the ET that all users achieve with this scheme is 
\begin{equation}
E\left[C_{\text{U}}\right]\Big|_{\text{ET}}=\left({\sum\limits_{n=1}^{N}{\frac{1}{\E\left[C_{\text{U}_{n,\text{f}}}\right]}}}\right)^{-1},
\end{equation}
and the scheduling probability required for user $n$ reduces to $p_{n}\!=\!1\Big/\left({\sum\limits_{i=1}^{N}{\displaystyle  \frac{\E\left[C_{\text{U}_{n,\text{f}}}\right]}{\E\left[C_{\text{U}_{i,\text{f}}}\right]}}}\right)$. The average harvested energy of user $n$ is $\E\left[EH_{\U_n}\right]=\left(1-p_n\right)\eta P \Omega_n$.

We note that round robin and conventional ET scheduling are not biased towards power transfer nor towards information transfer. Hence, both schemes achieve only one feasible point in the R-E region of each user. In the following, we propose three scheduling schemes which enable the control of the R-E tradeoff for each user.

\section{Order-based Scheduling Schemes}
\subsection{Order-based SNR Scheduling}
\label{s:order_based_SNR}
By exploiting the knowledge of the users' channels at the access point for the selection process\footnote{The channel coefficients of the AP-UT links can be fed back by the users in the uplink in frequency division duplex (FDD) systems  or can be assumed available in time division duplex (TDD) systems due to channel reciprocity \cite{Unified_Scheduling_approach}.}, we propose an order-based SNR scheduling scheme, which schedules the user having the $j^{\text{th}}$ ascendingly-ordered SNR for receiving information, where order $j$  is a design parameter chosen from $\{1,\ldots,N\}$. If $j=N$, order-based SNR scheduling reduces to maximum-SNR scheduling. 
\subsubsection{Order-based SNR Scheduling Algorithm}
Since $P$ and $\sigma^2$ are identical for all users, ordering the SNRs in our model is equivalent to ordering the channel power gains. Hence, the order-based SNR scheduler selects for information transmission the user $n^*$ that satisfies 
 \begin{equation}
n^*=\argorder\limits_{n\in\{1,\ldots,N\}} h_n,
\label{eq:order_SNR_rule}
\end{equation}
where we define \textquotedblleft$\argorder$" as the argument of the $j^{\text{th}}$ ascending order. 
\subsubsection{Performance Analysis}
To analyze the per-user ergodic rate and average harvested energy for the order-based SNR scheduling scheme, the instantaneous channel power gains $h_n,\,\, n=1,\ldots,N,$ of all users are ascendingly ordered as $h_{(1)}\leq h_{(2)}\leq ...\leq h_{(N)}$, where $h_{(j)}$ is the $j^{\text{th}}$ smallest channel power gain. The pdf of $h_{(j)}$ for i.n.d. channels is given by (equivalent to \cite[eq.(1.1)]{order_statistics_ind_1994})
\begin{equation} 
f_{h_{(j)}}(x)=\sum\limits_{n=1}^N\sum\limits_{\mathcal{P}_n}\prod\limits_{l=1}^{j-1}{F_{h_{i_l}}(x)}\,f_{h_{n}}(x)\\
\prod\limits_{l=j}^{N-1}{\left(1-F_{h_{i_l}}(x)\right)},
\label{eq:pdf_order_statistics_ind}
\end{equation}
where $f_{h_{n}}(x)$ and $F_{h_{n}}(x)$ are respectively the pdf and the cdf of the channel power gain of user $n$ given in Table \ref{tab:pdfs_cdfs_hn} for the considered fading channels, $\sum_{\mathcal{P}_n}$ denotes the summation over all $\binom{N-1}{j-1}$ permutations ($i_1,\ldots,i_{N-1}$) of $(1,\ldots,n-1,n+1,\ldots,N)$ for which $i_1 <\ldots < i_{j-1}$ and  $i_j <\ldots < i_{N-1}$. For a given order $j$, the ergodic rate achieved by user $n$ is given by
\begin{equation}
\E\left[C_{j,\U_n}\right]=\int\limits_0^\infty{\log_2\left(1+\bar{\gamma}x\right)f_{h_{n}}(x)\sum\limits_{\mathcal{P}_n}\prod\limits_{l=1}^{j-1}{F_{h_{i_l}}(x)}\,\prod\limits_{l=j}^{N-1}{\left(1-F_{h_{i_l}}(x)\right)}\dd x},
\label{eq:Order_SNR_per_user_Ergodic_capacity}
\end{equation} 
and the average amount of energy harvested by user $n$ is given by
\begin{equation} %
\E\left[EH_{j,\U_n}\right]=\eta P\int\limits_0^\infty x f_{h_{n}}(x)\left(1-\sum\limits_{\mathcal{P}_n}\prod\limits_{l=1}^{j-1}{F_{h_{i_l}}(x)}
\prod\limits_{l=j}^{N-1}{\left(1-F_{h_{i_l}}(x)\right)}\right) \dd x,
\label{eq:Order_SNR_per_user_EH}
\end{equation}
since a user harvests energy when it is not scheduled (i.e., when the order of its SNR is not $j$).
\begin{table}[!t]
\caption{High-SNR approximation for the ergodic per-user rate over i.n.d. Nakagami-$m$, Weibull, and Ricean fading channels using order-based SNR scheduling. The exact ergodic per-user rate is shown for Rayleigh fading. Sets $\mathcal{U}_{n,r}$ and $\mathcal{S}_{m,r}$ are defined in (\ref{eq:set_Unr}) and (\ref{eq:set_Sm}), respectively.}
\label{tab:avg_Capacity_SNR}  
\begin{tabular}{@{}ll@{}}\toprule 
Fading Model & $\E\left[C_{j,\U_n}\right]$ \\ \midrule 
	 		Nakagami-$m$ & $\begin{aligned}\frac{\lambda_n^m}{\ln(2)\Gamma(m)}\sum\limits_{r=0}^{j-1}(-1)^r\sum\limits_{\mathcal{U}_{n,r}}\sum\limits_{\mathcal{S}_{m,r}}\frac{\prod\limits_{t=1}^{N-j+r}\lambda_{u_t}^{s_t}}{\prod\limits_{t=1}^{N-j+r}s_t!} &\left(\lambda_n+\sum\limits_{t=1}^{N-j+r}\lambda_{u_t}\right)^{-\left(m+\sum\limits_{t=1}^{N-j+r} s_t\right)} \Gamma\left(m+\sum\limits_{t=1}^{N-j+r} s_t\right)\\[-1em] &\left(\psi\left(m+\sum\limits_{t=1}^{N-j+r} s_t\right)+\ln\Bigg(\frac{\bar{\gamma}}{\lambda_n+\sum\limits_{t=1}^{N-j+r}\lambda_{u_t}}\Bigg)\right)\end{aligned}$
		\\ \addlinespace[1em]
 Weibull& $\begin{aligned}\frac{1}{\ln(2)}\lambda_n^k\sum\limits_{r=0}^{j-1}(-1)^r\sum\limits_{\mathcal{U}_{n,r}}\frac{1}{\left(\lambda_n^k+\sum\limits_{t=1}^{N-j+r}\lambda_{u_t}^k\right)}\left(\ln\left(\bar{\gamma}\right)-\frac{1}{k}\left(\ln\left(\lambda_n^k+\sum\limits_{t=1}^{N-j+r}\lambda_{u_t}^k\right)+\textrm{C}\right)\right)\end{aligned}$
\\ \addlinespace[1em]
 Ricean& $\begin{aligned}\frac{1}{\ln(2)}\beta_n\sum\limits_{r=0}^{j-1}(-1)^r\sum\limits_{\mathcal{U}_{n,r}}\frac{1}{\left(\beta_n+\sum\limits_{t=1}^{N-j+r}\beta_{u_t}\right)}\left(\ln\left(\bar{\gamma}\right)-\frac{1}{\mu'}\left(\ln\left(\beta_n+\sum\limits_{t=1}^{N-j+r}\beta_{u_t}\right)+\textrm{C}\right)\right)\end{aligned}$
\\ \addlinespace[1em]
 Rayleigh &	$\begin{aligned}\frac{1}{\ln(2)}\lambda_n\sum\limits_{r=0}^{j-1}(-1)^r\sum\limits_{\mathcal{U}_{n,r}}\frac{1}{\left(\lambda_n+\sum\limits_{t=1}^{N-j+r}\lambda_{u_t}\right)}\e^{\frac{1}{\bar{\gamma}}\left(\lambda_n+\sum\limits_{t=1}^{N-j+r}\lambda_{u_t}\right)}\E_1\left(\frac{1}{\bar{\gamma}}\left(\lambda_n+\sum\limits_{t=1}^{N-j+r}\lambda_{u_t}\right)\right)\end{aligned}$
\\		\bottomrule
		  \end{tabular}	
\end{table}
\begin{table}[!htbp]
  \caption{Average per-user harvested energy for i.n.d. fading channels using order-based SNR scheduling. Sets $\mathcal{U}_{n,r}$ and $\mathcal{S}_{m,r}$ are defined in (\ref{eq:set_Unr}) and (\ref{eq:set_Sm}), respectively.}
\label{tab:avg_EH_SNR}
\begin{tabular}{@{}ll@{}}\toprule 
Fading Model & $\E\left[EH_{j,\U_n}\right]$ \\ \midrule 
	 		Nakagami-$m$ & $\eta P\left(\Omega_n-\frac{\lambda_n^m}{\Gamma(m)}\sum\limits_{r=0}^{j-1}(-1)^r\sum\limits_{\mathcal{U}_{n,r}}\sum\limits_{\mathcal{S}_{m,r}}\frac{\prod\limits_{t=1}^{N-j+r}\lambda_{u_t}^{s_t}}{\prod\limits_{t=1}^{N-j+r}s_t!} \left(\lambda_n+\sum\limits_{t=1}^{N-j+r}\lambda_{u_t}\right)^{-\left(m+1+\sum\limits_{t=1}^{N-j+r} s_t\right)}\Gamma\left(m+1+\sum\limits_{t=1}^{N-j+r} s_t\right)\right)$
		\\ \addlinespace[0.5em]
 Weibull& $\eta P\left(\Omega_n-\lambda_n^k \Gamma\left(1+\frac{1}{k}\right) \sum\limits_{r=0}^{j-1}(-1)^r\sum\limits_{\mathcal{U}_{n,r}}\left(\lambda_n^k+\sum\limits_{t=1}^{N-j+r}\lambda_{u_t}^k\right)^{-\left(1+\frac{1}{k}\right)}\right)$
\\ \addlinespace[0.5em]
 Ricean& $\eta P\left(\Omega_n-\beta_n \Gamma\left(1+\frac{1}{\mu'}\right) \sum\limits_{r=0}^{j-1}(-1)^r\sum\limits_{\mathcal{U}_{n,r}}\left(\beta_n+\sum\limits_{t=1}^{N-j+r}\beta_{u_t}\right)^{-\left(1+\frac{1}{\mu'}\right)}\right)$
\\ \addlinespace[0.5em]
 Rayleigh &	$\eta P\left(\Omega_n-\lambda_n\sum\limits_{r=0}^{j-1}(-1)^r\sum\limits_{\mathcal{U}_{n,r}}\frac{1}{\left(\lambda_n+\sum\limits_{t=1}^{N-j+r}\lambda_{u_t}\right)^2}\right)$
\\		\bottomrule
		  \end{tabular}	
\end{table}

Closed-form expressions for the per-user average rate and harvested energy in (\ref{eq:Order_SNR_per_user_Ergodic_capacity}) and (\ref{eq:Order_SNR_per_user_EH}), respectively, are derived in Appendix \ref{app:proof_EH_C_SNR} for Nakagami-$m$ and Weibull fading. Since RF EH targets short-range application scenarios, we provide in Table \ref{tab:avg_Capacity_SNR} lower bounds for the ergodic rates which become tight at high SNR for Nakagami-$m$ and Weibull fading, cf. Appendix \ref{app:proof_EH_C_SNR}. Results for Ricean fading are deduced from those for Weibull fading by replacing $k$ by $\mu'$ and $\lambda_n^{k}$ by $\beta_n$, cf. Section \ref{ss:channel_model}. Therefore, for Ricean fading, the high-SNR rate approximation is not a lower bound since it is obtained using the Marcum-Q function approximation. The exact average per-user rate for Rayleigh fading is obtained by analytically solving (\ref{eq:Order_SNR_per_user_Ergodic_capacity}).  To obtain the exact ergodic rate for the other considered fading distributions, the integral in (\ref{eq:Order_SNR_per_user_Ergodic_capacity}) can be solved numerically. Table \ref{tab:avg_EH_SNR} shows the per-user average harvested energies for the considered fading channels, which are valid for all SNR regimes.

\subsection{Order-based Normalized-SNR (N-SNR) Scheduling}
\label{s:order_based_NSNR_scheme}
Since order-based SNR scheduling in (\ref{eq:order_SNR_rule}) may deprive some of the users from receiving information if the users have different average channel conditions, we propose a proportionally fair order-based \emph{normalized}-SNR scheduler which selects for information transmission the user having the $j^{\text{th}}$ ascendingly-ordered N-SNR. 
\subsubsection{Order-based N-SNR Scheduling Algorithm}
The order-based N-SNR scheme schedules for information transmission the user $n^*$ that satisfies 
\begin{equation}
n^*=\argorder\limits_{n\in\{1,\ldots,N\}} \frac{h_n}{\Omega_n}.
\label{eq:order_normalized_power_gain_rule}
\end{equation}
The normalization in (\ref{eq:order_normalized_power_gain_rule}) ensures that all users access the channel on average an equal number of times, and thus proportional fairness is ensured in terms of both the ergodic rate and the average amount of harvested energy. 
\subsubsection{Performance Analysis}
The random variables (RVs) to be ordered, $X_n \definedas \frac{h_n}{\Omega_n}$, have the same distribution as $h_n$ but with unit mean $\forall n\in\{1,\ldots,N\}$. Since all user channels are assumed to have the same shape parameter (i.e., the same Ricean factor $K$ for Ricean fading, the same $m$ for Nakagami-$m$ fading, and the same $k$ for Weibull fading), RVs $X_n$ are independent and identically distributed (i.i.d.) and their pdf $f_X(x)$ and cdf $F_X(x)$ are given by $f_{h_n}(x)$ and $F_{h_n}(x)$ in Table \ref{tab:pdfs_cdfs_hn} after setting $\Omega_n=1$. 
In the following, we analyze the per-user ergodic rate and average harvested energy achieved with order-based N-SNR scheduling. The pdf of the $j^{\text{th}}$ order statistics of the i.i.d. RVs $X_n$ is a special case of (\ref{eq:pdf_order_statistics_ind}) and can be written as \cite[eq. 2.1.6]{Order_Statistics_David_Nagaraja}
\begin{equation}
f_{X_{(j)}}(x)=N\binom{N-1}{j-1}f_X(x)[F_X(x)]^{j-1}[1-F_X(x)]^{N-j}.
\label{eq:pdf_order_statistics}
\end{equation}
\begin{table}[!t]
\caption{High-SNR approximation for the ergodic per-user rate over i.n.d. Nakagami-$m$, Weibull, and Ricean fading channels using order-based N-SNR scheduling. The exact ergodic per-user rate is shown for Rayleigh fading. The set $\mathcal{I}_{m,l}$ is defined in (\ref{eq:I_m_l}).} 
\label{tab:avg_Capacity_NSNR}
\begin{tabular}{@{}ll@{}}\toprule 
Fading Model & $\E\left[C_{j,\U_n}\right]$ \\ \midrule \addlinespace[0.5em]
	 		Nakagami-$m$ & 		
		$\begin{aligned}&\frac{1}{\Gamma(m)\ln(2)}\binom{N-1}{j-1}\sum\limits_{l=N-j}^{N-1} (-1)^{l-N+j} \binom{j-1}{N-l-1} \,\frac{l!}{(1+l)^m}\\
&\sum\limits_{\mathcal{I}_{m,l}}\left(\prod\limits_{s=0}^{m-1}\frac{\left(\frac{1}{s!(1+l)^s}\right)^{i_s}}{i_s!}\right)\Gamma\left(m+\sum\limits_{s=0}^{m-1} s\,i_s\right)\left(\psi\left(m+\sum\limits_{s=0}^{m-1} s\,i_s\right)+\ln\left(\frac{\bar{\gamma}_n}{m(1+l)}\right)\right)\\
\end{aligned}$\\ \addlinespace[0.5em]
 Weibull
&$\frac{1}{\ln(2)}\binom{N-1}{j-1}\sum\limits_{l=0}^{j-1}\frac{(-1)^l\binom{j-1}{l}}{N-j+l+1} \Bigg(\ln(\bar{\gamma}_n)-\frac{1}{k}\bigg(\ln\Big((N-j+l+1)\Gamma(1+\frac{1}{k})^k\Big)+\text{C}\bigg)\Bigg)$\\ \addlinespace[0.5em]
 Ricean
&$\frac{1}{\ln(2)}\binom{N-1}{j-1}\sum\limits_{l=0}^{j-1}\frac{(-1)^l\binom{j-1}{l}}{N-j+l+1} \Bigg(\ln(\bar{\gamma}_n)-\frac{1}{\mu'}\bigg(\ln\Big((N-j+l+1)\beta\Big)+\text{C}\bigg)\Bigg)$
\\ \addlinespace[0.5em]
 Rayleigh &		$\frac{1}{\ln(2)}\binom{N-1}{j-1}\sum\limits_{l=0}^{j-1} \frac{(-1)^{l} \binom{j-1}{l}}{{(N-j+l+1) }} \e^{\frac{(N-j+l+1)}{\bar{\gamma}_n}} 
		\E_1\left(\frac{(N-j+l+1)}{\bar{\gamma}_n}\right)$ \\		\bottomrule
		  \end{tabular}	
\end{table}
\begin{table}[!htbp]
  \caption{Average per-user harvested energy for i.n.d. fading channels using order-based N-SNR scheduling. The set $\mathcal{I}_{m,l}$ is defined in (\ref{eq:I_m_l}).}
\label{tab:avg_EH_NSNR}
		\begin{tabular}{@{}ll@{}}\toprule    
		Fading Model & $\E[EH_{j,\U_n}]=\eta P \Omega_n \left(1-\frac{1}{N}\E[X_{(j)}]\right)$\\ \midrule \addlinespace[0.5em]
		Nakagami-$m$ &
		$\eta P\Omega_n\left(1- \frac{1}{\Gamma(m+1)}\binom{N-1}{j-1}\sum\limits_{l=N-j}^{N-1}(-1)^{l-N+j}\binom{j-1}{N-l-1}\frac{l!}{(1+l)^{m+1}}
\sum\limits_{\mathcal{I}_{m,l}}\left(\prod\limits_{s=0}^{m-1}\frac{\left(\frac{1}{s!(1+l)^s}\right)^{i_s}}{i_s!}\right)\Gamma\left(m+1+\sum\limits_{s=0}^{m-1}s\, i_s\right)\right)
$\\ \addlinespace[0.5em]
		Weibull&
$\eta P\Omega_n \left(1 - \binom{N-1}{j-1}\sum\limits_{l=0}^{j-1}{(-1)^l\binom{j-1}{l} (N-j+l+1)^{-\left(1+\frac{1}{k}\right)}}\right)
$\\ \addlinespace[0.5em]
		Ricean&
$\eta P\Omega_n \left(1 - \binom{N-1}{j-1}\sum\limits_{l=0}^{j-1}{(-1)^l\binom{j-1}{l} (N-j+l+1)^{-\left(1+\frac{1}{\mu'}\right)}}\right)$
\\ \addlinespace[0.5em]
		Rayleigh&
$\eta P\Omega_n\left(1-\frac{1}{N}\sum\limits_{l=N-j+1}^{N}{\frac{1}{l}}\right)$\\ \bottomrule
\end{tabular}		
\end{table}
\\Hence, the ergodic rate of user $n$ is obtained as
\begin{equation}
\E\left[C_{j,\U_n}\right]=\frac{1}{N}\int\limits_0^\infty{\log_2\left(1+\bar{\gamma}_nx\right)f_{X_{(j)}}(x) \dd x},
\label{eq:Ergodic_capacity_NSNR}
\end{equation}
where $\bar{\gamma}_n\definedas \bar{\gamma}\Omega_n$ is the average SNR of user $n$, and  $\frac{1}{N}$ is the probability that the normalized channel of user $n$ has the $j^{\text{th}}$ order, since the normalized channels are i.i.d.
The average harvested energy of user $n$ is 
\begin{equation}
\E\left[EH_{j,\U_n}\right]=\eta P \Omega_n\int\limits_{0}^{\infty}{x\left(f_{X}(x)- \frac{1}{N}f_{X_{(j)}}(x)\right) \dd x}=\eta P \Omega_n\left[1-\frac{\E[X_{(j)}]}{N}\right],
\label{eq:EH_Un_NSNR}
\end{equation}
where we used $f_X(x)=\frac{1}{N}\sum_{j=1}^N f_{X_{(j)}}(x)$ and the fact that a user harvests energy when its normalized channel power is not the $j^{\text{th}}$ ordered one. 

Closed-form expressions for the per-user average rate and harvested energy in (\ref{eq:Ergodic_capacity_NSNR}) and (\ref{eq:EH_Un_NSNR}) are derived in Appendix \ref{app:proof_EH_C_NSNR} for Nakagami-$m$ and Weibull fading. Results for Ricean fading are deduced from those for Weibull fading by replacing $k$ by $\mu'$ and $\left(\Gamma\left(1+\frac{1}{k}\right)\right)^{k}$ by $\beta\definedas \e^{\nu(\sqrt{2K})}\sqrt{2(K+1)}^{\mu(\sqrt{2K})}$, cf. Section \ref{ss:channel_model}. High-SNR approximations for the ergodic rates are provided in Table \ref{tab:avg_Capacity_NSNR} for Nakagami-$m$, Weibull, and Ricean fading. The exact ergodic rate is shown for Rayleigh fading. Table \ref{tab:avg_EH_NSNR} shows the per-user average harvested energies for the considered fading channels, which are valid for all SNR regimes.
\vspace{-0.2cm}
\subsection{Order-based ET scheduling}
\label{s:controllable_ET_scheduling}
From the users' perspective, fairness in the sense that all users get the same average throughput may be preferred over proportional fairness. In this section, we design a resource allocation algorithm which does not only allocate ET to all users but also enables trading the ET value for the average amount of energy harvested by the users.
\subsubsection{Order-based ET Scheduling Algorithm}
Unlike in the conventional ET scheduling scheme discussed in Section \ref{ss:ET} which schedules the user having the smallest moving-average throughput among the set of \emph{all} users, in the proposed order-based ET scheduling scheme, a user is eligible for selection only if the order of its N-SNR falls into a specific set of allowed orders $\Sa\subset \{1,\ldots,N\}$, with $|\Sa|>1$. Among these eligible users, the selected user is the one which has the smallest throughput so far. This potentially leads to a controllable ET for all users. For example, if $\Sa=\{1,\ldots, N\}$, then all users are eligible for being scheduled by the AP and the scheme reduces to conventional ET scheduling as described in Section \ref{ss:ET}. If, however,  $\Sa= \{1,\ldots,\lfloor\frac{N}{2}\rfloor\}$, then the scheduled user is always among that half of the users, which have the lowest instantaneous N-SNRs. Hence, with this set, the average amount of energy harvested by the users  is expected to increase at the expense of a reduction in the ET compared to  conventional ET scheduling. This is because a user from the set of low N-SNR users is scheduled for data reception and the users having relatively high N-SNRs are selected for EH. Furthermore, sets $\Sa=\{1,2\}$ and $\Sa=\{N-1,N\}$ provide the extreme cases for the R-E tradeoff, by providing the minimum and the maximum possible ETs, respectively. Considering all cases with $|\Sa|>1$, there are in total $\sum_{k=2}^{N}\binom{N}{k}=\sum_{k=0}^{N}\binom{N}{k}- (N+1)=2^N -(N+1)$ possible choices for the set $\Sa$. The system designer can choose the set $\Sa$ which results in a desirable R-E tradeoff for the respective application.
We note that depending on the choice of $\Sa$ and the average channel power of the users $\Omega_n$, ET scheduling may not always be feasible. This issue is investigated later in Theorem \ref{theo:feasibility_conditions} in detail. 

To describe the scheduling algorithm, we define $O_{\U_n}$ as the order of the N-SNR of user $n$, where $O_{\U_n}\in\{1,\ldots,N\}$. In time slot $t$, the order-based ET scheduler selects for information transmission the user $n^*$ that satisfies  
 \begin{equation}
n^*=\argmin\limits_{O_{\U_n}\in \Sa} r_n(t-1),
\label{eq:order_ET_controllable_rule}
\end{equation}
where $r_n(t-1)$ is the throughput of user $n$ averaged over all previous time slots up to slot $t-1$. The throughput of the users is updated recursively as in (\ref{eq:throughput_update}).

\subsubsection{Performance Analysis}
In the following, we derive the achievable ET as well as the per-user average harvested energy for the order-based ET scheduling scheme. The average rate of user $n$ can be formulated as
\begin{equation}
\E[C_{\U_n}]=\E[C_{\U_n}|O_{\U_n}\in \Sa] \times \text{Pr}(O_{\U_n}\in \Sa).
\end{equation}
Since $O_{\U_n}$ takes values in $\{1,\ldots,N\}$ with equal probability $\forall\,n$, all users visit the set $\Sa$ with the same probability given by $ \text{Pr}(O_{\U_n}\in \Sa)=\frac{|\Sa|}{N}$. Thus, the average rate of user $n$ can be expressed as
\begin{equation}
\begin{aligned}
\E[C_{\U_n}]&=\frac{\left|\Sa\right|}{N} \int\limits_{0}^\infty\log_2(1+\bar{\gamma}_n x) \left(\frac{1}{|\Sa|}\sum\limits_{j\in\Sa}f_{X_{(j)}}(x) \right) \dd x \times \text{Pr}(\U_n|O_{\U_n}\in\Sa)\\
&=\sum\limits_{j\in\Sa}\E[C_{j,\U_n}]\Big|_{\text{N-SNR}}\text{Pr}(\U_n|O_{\U_n}\in\Sa),
\end{aligned}
\end{equation}
where $\frac{1}{|\mathcal{S}_{\text{a}}|}f_{X_{(j)}}(x)$ is the likelihood function that the order of the normalized channel of user $n$ is $j$ given that $j\in\Sa$, and $\text{Pr}(\U_n|O_{\U_n}\in\Sa)=\text{Pr}(n^*=n|O_{\U_n}\in\Sa)$ is the probability that user $n$ is scheduled given that it is eligible for scheduling. $\E[C_{j,\U_n}]\Big|_{\text{N-SNR}}$ is the average rate achieved by user $n$ using the order-based N-SNR scheduling scheme, cf. (\ref{eq:Ergodic_capacity_NSNR}).

In order to write the average rate of user $n$ in terms of its unconditional probability of being scheduled, we define $p_n$ as the probability that user $n$ is scheduled and use
\begin{equation}
p_n\definedas\text{Pr}(\U_n)=\text{Pr}(\U_n|O_{\U_n}\in\Sa) \text{Pr}(O_{\U_n}\in\Sa)=\text{Pr}(\U_n|O_{\U_n}\in\Sa)\frac{|\Sa|}{N}.
\label{eq:pn_pn_conditioned}
\end{equation}
Hence, the average rate of user $n$ reduces to
\begin{equation}
\E[C_{\U_n}]=\frac{N}{|\Sa|}\sum\limits_{j\in\Sa}\E[C_{j,\U_n}]\Big|_{\text{N-SNR}}p_n\req r,\quad \forall n\in\{1,\ldots,N\},
\label{eq:Avg_Capacity_Un_ET_step}
\end{equation}
where the average rate of all users is forced to be equal to $r$. Thus, the probability of channel access for user $n$ can be expressed as $p_n=r\Big/\left(\frac{N}{|\Sa|}\sum\limits_{j\in\Sa}\E[C_{j,\U_n}]\Big|_{\text{N-SNR}}\right)$, and since $\sum\limits_{n=1}^{N}{p_n}=1$ must hold, the resulting equal throughput $r$ reduces to
\begin{equation}
r=\frac{1}{\frac{1}{N}\sum\limits_{n=1}^{N}{\frac{1}{\frac{1}{|\Sa|}\sum\limits_{j\in\Sa}\E[C_{j,\U_n}]\Big|_{\text{N-SNR}}}}}.
\label{eq:controllable_ET}
\end{equation}
That is, for the order-based ET scheme, the equal throughput achieved by all users is given by the harmonic mean of the arithmetic means of the users' N-SNR rates $\E[C_{j,\U_n}]\Big|_{\text{N-SNR}}$ over all $j\in\Sa$. This indicates that the higher the orders in $\Sa$, the larger the resulting arithmetic means and consequently the larger the resulting ET $r$. Moreover, the harmonic mean indicates that the user having the worst average channel will have a dominant effect on the resulting ET. Using the closed-form expressions for the average rate of the order-based N-SNR scheme $\E[C_{j,\U_n}]\Big|_{\text{N-SNR}}$ for the considered fading channels in Table \ref{tab:avg_Capacity_NSNR}, the achievable ET in (\ref{eq:controllable_ET}) can be obtained in closed-form. The probability of channel access required for user $n$ to achieve the same throughput as the other active users is obtained from (\ref{eq:Avg_Capacity_Un_ET_step}) and (\ref{eq:controllable_ET}) as 
\begin{equation}
p_n=\left(\sum\limits_{i=1}^{N}{\frac{\sum\limits_{j\in\Sa}\E[C_{j,\U_n}]\Big|_{\text{N-SNR}}}{\sum\limits_{j\in\Sa}\E[C_{j,\U_i}]\Big|_{\text{N-SNR}}}}\right)^{-1}, \quad \forall n\in\{1,\ldots,N\}.
\label{eq:pn_controllable_ET}
\end{equation} 
As mentioned before, for certain combinations of $\Sa$ and $\Omega_n,\,\, n = 1,\ldots,N$, the order-based ET scheduling algorithm may fail to provide all users with the same throughput. In particular, the set of
scheduling probabilities $p_n,\,\,\,n = 1,\ldots,N$, in (\ref{eq:pn_controllable_ET}) required for the users to achieve ET may be infeasible. In the following theorem, we provide necessary and sufficient conditions for ET-feasibility of the proposed order-based ET scheduling algorithm.
\begin{theorem}
The order-based ET scheme with $\left|\Sa\right|>1$ is ET-feasible if and only if
\begin{equation}
\begin{aligned}
p_n &\leq \frac{|\Sa|}{N}, &&\quad \forall n\in\{1,\ldots,N\},\\
\sum\limits_{l=1}^{L} p_{n_{l}} &\leq \frac{\binom{N-1}{|\Sa|-1}L+\binom{L}{|\Sa|}(1-|\Sa|)}{\binom{N}{|\Sa|}},  &&\quad\begin{aligned}&\forall (n_1,\ldots,n_{L})\in\mathcal{C}_{L},\\
&\forall L=|\Sa|,\ldots,N,\end{aligned}
\end{aligned}
\label{eq:feasibility_conditions}
\end{equation}
where $p_n$ is given by (\ref{eq:pn_controllable_ET}) and $\mathcal{C}_{L}$ is the set of all $\binom{N}{L}$ combinations $(n_1,\ldots,n_{L})$ of $\{1,\ldots,N\}$.
\label{theo:feasibility_conditions}
\end{theorem}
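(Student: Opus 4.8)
The plan is to reduce ET-feasibility to the question of which vectors of per-user scheduling probabilities can be realized by a valid per-slot selection rule, and then to characterize that set of vectors by a Hall/max-flow argument. By (\ref{eq:Avg_Capacity_Un_ET_step}) the ergodic rate of user $n$ under the order-based ET scheme is $\bar{C}_n\,p_n$, where $\bar{C}_n=\frac{N}{|\Sa|}\sum_{j\in\Sa}\E[C_{j,\U_n}]\big|_{\text{N-SNR}}$ is a fixed constant and $p_n$ is the long-run probability that user $n$ is scheduled; forcing all ergodic rates equal therefore pins $p_n$ to the value in (\ref{eq:pn_controllable_ET}). Hence ET is achievable if and only if there is a per-slot rule — picking exactly one user out of the instantaneously eligible set — whose induced marginal scheduling probabilities are precisely those $p_n$. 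That the greedy rule (\ref{eq:order_ET_controllable_rule}) is the correct rule to use is the only ``dynamical'' ingredient: for opportunistic minimum-throughput schedulers the moving averages converge to the common value whenever \emph{some} stationary selection rule achieves it, so nothing is lost by requiring only that the target $(p_n)$ be realizable; I would invoke \cite{equal_throughput_2009,Unified_Scheduling_approach} here.

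Next I set up the combinatorics. Since the normalized SNRs are i.i.d., their ranks are a uniformly random permutation of $\{1,\ldots,N\}$, so in each slot the eligible set $\{n:O_{\U_n}\in\Sa\}$ is a uniformly random $|\Sa|$-subset of $\{1,\ldots,N\}$, each of the $\binom{N}{|\Sa|}$ subsets occurring with probability $1/\binom{N}{|\Sa|}$. A randomized selection rule assigns to each such subset $S$ a distribution on its members, and the realizable marginal vectors $(p_n)$ are exactly the feasible user-marginals of a bipartite transportation problem with the subsets (each of mass $1/\binom{N}{|\Sa|}$) as sources and the users as sinks, an edge joining $S$ to $n$ iff $n\in S$. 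By the Gale--Hoffman feasibility criterion (equivalently max-flow/min-cut), and using that $\sum_n p_n=1$ holds automatically, such a rule exists iff for every $T\subseteq\{1,\ldots,N\}$
\[
\sum_{n\in T} p_n \;\le\; \pr\big(\{n:O_{\U_n}\in\Sa\}\cap T\neq\emptyset\big)\;=\;1-\frac{\binom{N-|T|}{|\Sa|}}{\binom{N}{|\Sa|}} .
\]
Necessity of this is transparent — in any slot with no member of $T$ eligible, no member of $T$ is scheduled, so the scheduling mass placed on $T$ cannot exceed the probability that $T$ meets the eligible set — and sufficiency is exactly the flow-feasibility theorem applied to this bipartite instance.

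It then remains to put this family in closed form. For $|T|=1$ the bound is $p_n\le 1-\binom{N-1}{|\Sa|}/\binom{N}{|\Sa|}=|\Sa|/N$, the first line of (\ref{eq:feasibility_conditions}); for $|T|=L$, the number $\binom{N}{|\Sa|}-\binom{N-L}{|\Sa|}$ of $|\Sa|$-subsets that meet a fixed $L$-set is rewritten by sorting those subsets according to how many elements of $T$ they contain and invoking the double-count $\sum_{S}|S\cap T|=L\binom{N-1}{|\Sa|-1}$, which brings the right-hand side to the second line of (\ref{eq:feasibility_conditions}); since the bound is vacuous once $L>N-|\Sa|$, only the stated range of $L$ must be kept. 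The main obstacle is conceptual rather than computational: one must argue carefully that the greedy minimum-throughput rule attains every feasible throughput vector and that permitting the rule to see the actual N-SNR \emph{orders} rather than only their ranks cannot enlarge the set of realizable $(p_n)$ (a symmetry argument over rank-permutations); granting that, the theorem is the Gale--Hoffman criterion plus routine binomial bookkeeping.
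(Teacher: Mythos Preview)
Your reduction to a bipartite transportation problem and invocation of the Gale--Hoffman/Hall criterion is the right way to obtain a genuine necessary-and-sufficient characterization; the paper's appendix, by contrast, argues only necessity. There, one sums the identities $\binom{N}{|\Sa|}p_n=\sum_{S\ni n}q_n^S$ over a set $T$ of users, replaces each complete group of $|\Sa|$ terms (those with $S\subseteq T$) by $1$, and then bounds every \emph{remaining} $q_n^S$ individually by $1$; sufficiency is not addressed. So your route is not merely different, it is more complete.

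The gap in your proposal is the final algebraic step. You assert that the double-count $\sum_S|S\cap T|=L\binom{N-1}{|\Sa|-1}$ lets you rewrite $\binom{N}{|\Sa|}-\binom{N-L}{|\Sa|}$ as $L\binom{N-1}{|\Sa|-1}+(1-|\Sa|)\binom{L}{|\Sa|}$; this identity is \emph{false} for $|\Sa|\ge 3$. Take $N=5$, $|\Sa|=3$, $L=3$: your Hall bound gives $\binom{5}{3}-\binom{2}{3}=10$, whereas the paper's numerator gives $3\binom{4}{2}+(1-3)\binom{3}{3}=16$. The discrepancy is precisely that the paper bounds each leftover $q_n^S$ by $1$ rather than using $\sum_{n\in S\cap T}q_n^S\le 1$ when $1\le|S\cap T|<|\Sa|$; for $|\Sa|=2$ these coincide (the leftover terms then have $|S\cap T|=1$), but for $|\Sa|\ge 3$ the paper's inequality is strictly weaker than Hall's. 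Concretely, with $N=5$, $|\Sa|=3$ and $p_1=p_2=\tfrac12$, $p_3=p_4=p_5=0$ (approached by taking $\Omega_1=\Omega_2$ small and $\Omega_3,\Omega_4,\Omega_5\to\infty$ in (\ref{eq:pn_controllable_ET})), every inequality in (\ref{eq:feasibility_conditions}) holds, yet ET is infeasible because the eligible set $\{3,4,5\}$ occurs with probability $1/10$, forcing $p_1+p_2\le 9/10$. Hence your Hall condition is the correct ``iff'', and the stated theorem is, for $|\Sa|\ge 3$, only an ``only if''. Note also that the active ranges of $L$ do not match: Hall's constraints are nontrivial for $1\le L\le N-|\Sa|$, while the paper records them for $L\ge|\Sa|$, so your claim that ``only the stated range of $L$ must be kept'' cannot be right as written.
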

\begin{proof} Please refer to Appendix \ref{app:feasibility_conditions_proof}. \end{proof}
\begin{remark}
Note that the second feasibility condition is always satisfied for $L\!=\!N$ as it reduces to $\sum_{n=1}^{N} p_n\leq 1$ which is satisfied with equality by definition. Furthermore, the conventional ET scheme, where $|\Sa|=\!N$, is always ET-feasible, since the first condition reduces to $p_n\leq 1$ which is always satisfied from (\ref{eq:pn_controllable_ET}) and the second condition reduces to the case $L=N$ and hence is always satisfied.   \end{remark}
\begin{remark}
In most practical scenarios, the order-based ET scheduling algorithm is ET-feasible. ET-infeasibility occurs when the mean channel power gains $\Omega_n$ of the users differ by many orders of magnitude. For example, a scenario with 4 users having Rayleigh fading channels with $\Omega_n=$$1,\,1,\,10^{-10},$ and $10^{-10}$ and with $\Sa=\{3,4\}$ is ET-feasible since the required scheduling  probability set $p_n=\{0.0884,0.0884,0.4116,0.4116\}$ satisfies the conditions in Theorem \ref{theo:feasibility_conditions}. In contrast, the same scenario but with $\Omega_n=1,\,1,\,10^{-11},$ and $10^{-11}$ is ET-infeasible since the required scheduling probability set $p_n=\{0.0603,0.0603,0.4397,0.4397\}$ does not satisfy the second feasibility condition in Theorem \ref{theo:feasibility_conditions} for $L=|\Sa|=2$. In this case, it can be verified by simulations that ET is not achieved as the ergodic rates of the users achieved with the algorithm in (\ref{eq:order_ET_controllable_rule}) are $3.5285,\,3.5285,\,2.415,$~and~$2.415$ \unit[]{bits/(channel use)}, respectively.
\label{remark:Mostly_feasible}
\end{remark}

Next, we analyze the average amount of harvested energy per user. Define $\Sac$ as the complement of the set $\Sa$ with respect to the set $\{1,\ldots,N\}$, then we can write the average harvested energy of user $n$ as
\begin{equation}
 \E\left[EH_{\U_n}\right]=\E\left[EH_{\U_n}|O_{\U_n}\in \Sac\right] \times \text{Pr}(O_{\U_n}\in \Sac)+\E\left[EH_{\U_n}|O_{\U_n}\in \Sa\right] \times \text{Pr}(O_{\U_n}\in \Sa).
\end{equation}
Since users whose N-SNR orders are in $\Sac$ will harvest energy with probability one, whereas users whose N-SNR orders are in $\Sa$ will only harvest if they are not scheduled to receive information, then we have $\E\left[EH_{\U_n}\right]$ 
\begin{align}
&=\eta P \Omega_n \left[\int\limits_0^\infty x \frac{1}{|\Sac|}\sum\limits_{j\in\Sac}f_{X_{(j)}}(x) \dd x \times \frac{|\Sac|}{N}+
\int\limits_0^\infty x \frac{1}{|\Sa|}\sum\limits_{j\in\Sa}f_{X_{(j)}}(x) \left(1-\frac{p_n N}{|\Sa|}\right) \dd x \times \frac{|\Sa|}{N}\right]\notag\\
&=\eta P \Omega_n \left[ \frac{1}{N}\sum\limits_{j=1}^N \E\left[X_{(j)}\right] - \frac{p_n}{|\Sa|}\sum\limits_{j\in\Sa}\E\left[X_{(j)}\right] \right]=\eta P \Omega_n \left[ 1-\frac{p_n}{|\Sa|}\sum\limits_{j\in\Sa}\E\left[X_{(j)}\right] \right],
\label{eq:avg_EH_controllable_ET}
\end{align}
where from (\ref{eq:pn_pn_conditioned}), $\frac{p_n N}{|\Sa|}$ is the conditional probability that user $n$ is scheduled given that $O_{\U_n}\in \Sa$. In the last step, the unity term is obtained using $\sum_{j=1}^{N} X_{(j)}\!=\!\sum_{n=1}^{N} X_n$, thus $\sum_{j=1}^{N} \E\left[X_{(j)}\right]\!=\!\sum_{n=1}^{N} \E\left[X_n\right]\!=\!N$, since the normalized channel powers $X_n$ are unit-mean RVs $\forall \,n\!=\!1,\!\ldots\!,N$. The mean $\E\left[X_{(j)}\right]$ was already obtained as part of the closed-form expression of the per-user average harvested energy for order-based N-SNR scheduling for the considered fading channels, cf. Table \ref{tab:avg_EH_NSNR}. Thus, the average per-user harvested energy for the order-based ET scheme in (\ref{eq:avg_EH_controllable_ET}) can be written in closed-form as well. Note that for all users to achieve the same throughput, users with bad channel conditions are selected more often for information reception (have higher $p_n$) than those who have better channel conditions. Hence, bad-channel users have less chances for energy harvesting. Furthermore, when a bad-channel user (with small $\Omega_n$) harvests, it harvests less energy than a good-channel user. In conclusion, the order-based ET scheduling scheme provides proportional fairness in terms of the harvested energy, which may also be observed from (\ref{eq:avg_EH_controllable_ET}).

\section{Numerical and Simulation Results}
\label{s:simulation_results}
The proposed scheduling schemes have been simulated for an indoor environment operating in the ISM band at a center frequency of $\unit[915]{MHz}$ (wavelength of 0.328\,m) and a bandwidth of $\unit[26]{MHz}$. The resulting  noise power is $\sigma^2=\unit[-96]{dBm}$ at the receivers of all users. We adopt the indoor path loss model in \cite{PathLoss_indoor_Rappaport_1992} for the case when the AP and the UTs are on the same floor (i.e., a path loss exponent of 2.76 is used, cf. \cite[Table I]{PathLoss_indoor_Rappaport_1992}). We assume an AP transmit power of $P=\unit[1]{W}$, an antenna gain of $\unit[10]{dBi}$ at the AP and $\unit[2]{dBi}$ at the UTs, and an RF-to-DC conversion efficiency of $\eta=0.5$. First, we consider a system with $N=7$ users having mean channel power gains $\Omega_n=n\times10^{-5}$, $n=1,\ldots,7$, which corresponds to an AP-UT  distance range of $\unit[2.27]{m}$ to $\unit[4.6]{m}$. 
\begin{figure}[!t]
\centering
\includegraphics[width=0.62\textwidth]{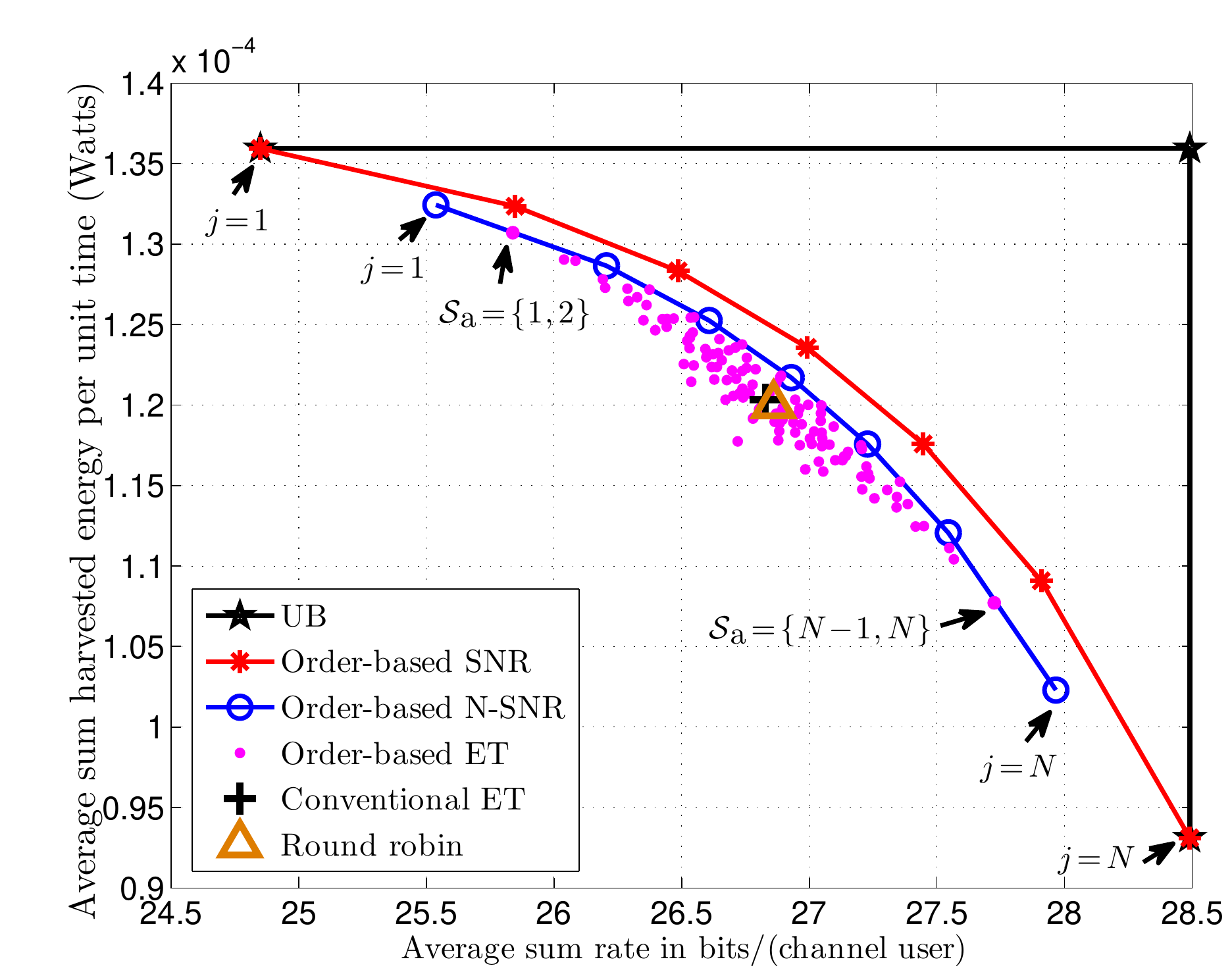}
\caption{Average sum rate and total harvested energy for different scheduling schemes for $N=7$ users over i.n.d. Nakagami-$m$ fading  with $m=3$.}
\label{fig:publication_curve_Nakagami_sum_rate_sum_energy_all_schemes}
\end{figure}

Fig. \ref{fig:publication_curve_Nakagami_sum_rate_sum_energy_all_schemes} shows the sum rate $\sum_{n=1}^N \E\left[C_{j,\U_n}\right]$ vs. the total harvested energy  $\sum_{n=1}^N \E[EH_{j,\U_n}]$ tradeoff behavior of the proposed order-based schemes over i.n.d. Nakagami-$m$ fading channels with $m=3$. Only closed-form results obtained from Tables \ref{tab:avg_Capacity_SNR}-\ref{tab:avg_EH_NSNR} together with (\ref{eq:controllable_ET}) and (\ref{eq:avg_EH_controllable_ET}) are shown since they perfectly match the simulated results. Fig. \ref{fig:publication_curve_Nakagami_sum_rate_sum_energy_all_schemes} shows that the smaller the values of $j$ and the lower the orders in the set $\Sa$, the higher the amount of total harvested energy at the expense of a reduced sum rate. This is because the good states of the channel are utilized for EH rather than for ID. Owing to the fact that the order-based SNR scheme provides the maximum possible sum rate for $j=N$ and the maximum possible total harvested energy for $j=1$, connecting the horizontal line passing through the R-E point for $j=1$ and the vertical line passing through the R-E point for $j=N$ of this scheme provides a valid rectangular upper bound for the R-E region of any scheduling scheme (denoted by UB). We also show the R-E points of two baseline schemes, namely, RR (cf. Section \ref{ss:RR}) and conventional ET (cf. Section \ref{ss:ET}), which are not biased towards information transfer nor towards energy transfer.

The proposed order-based schemes provide a R-E tradeoff in discrete steps. That is, for the order-based SNR/N-SNR schemes, there are $N$ tradeoff points corresponding to selection orders $j=1,\ldots,N$. For the order-based ET scheme, there are a total of $2^N-(N+1)=120$ possible sets $\Sa$ with $|\Sa|>1$, cf. Section \ref{s:controllable_ET_scheduling}, all of which are ET-feasible for the considered setup (as expected, see Remark \ref{remark:Mostly_feasible}). In Fig. \ref{fig:publication_curve_Nakagami_sum_rate_sum_energy_all_schemes}, we plot the achievable R-E points of the order-based ET scheduler for all possible sets to show that some ``good" sets provide a better R-E tradeoff than other sets (i.e., they provide higher harvested energy for the same sum rate). With the knowledge of the number of users $N$ and the channel statistics ($\Omega_n$ and $m$), the AP may obtain the ``good" $\Sa$ sets offline with the aid of the closed-form expressions in (\ref{eq:controllable_ET}) and (\ref{eq:avg_EH_controllable_ET}). Note that the sets of two consecutive orders $\{i,i+1\}$ provide R-E tradeoff points in between the R-E tradeoff points of the order-based N-SNR scheme for orders $i$ and $i+1$.

It is observed that, as far as the sum rate and the total harvested energy are concerned, the R-E tradeoff of order-based SNR scheduling is better than that of the order-based N-SNR scheduling which is better than that of the ET scheduling. This is due to the proportional fairness and the ET constraints of the order-based N-SNR and the order-based ET schedulers, respectively. If we consider the extreme tradeoff points of the order-based schemes, we find that going from $j=N$ to $j=1$ for the order-based SNR and N-SNR schemes, and from $\Sa=\{N-1,N\}$ to $\Sa=\{1,2\}$ for the order-based ET scheme, the total average harvested energy increases by $45.98\%$, $29.45\%$, and $21.35\%$, respectively, at the expense of a reduction in the ergodic sum rate of only $12.78\%$, $8.68\%$, and $6.78\%$, respectively.
\begin{figure}[!t]
\centering
 \subfloat[Per-user ergodic rate]{\label{subfig:Publication_curve_Nakagami_ordered_SNR_per_user_Capacity}\includegraphics[height=0.26\textheight]{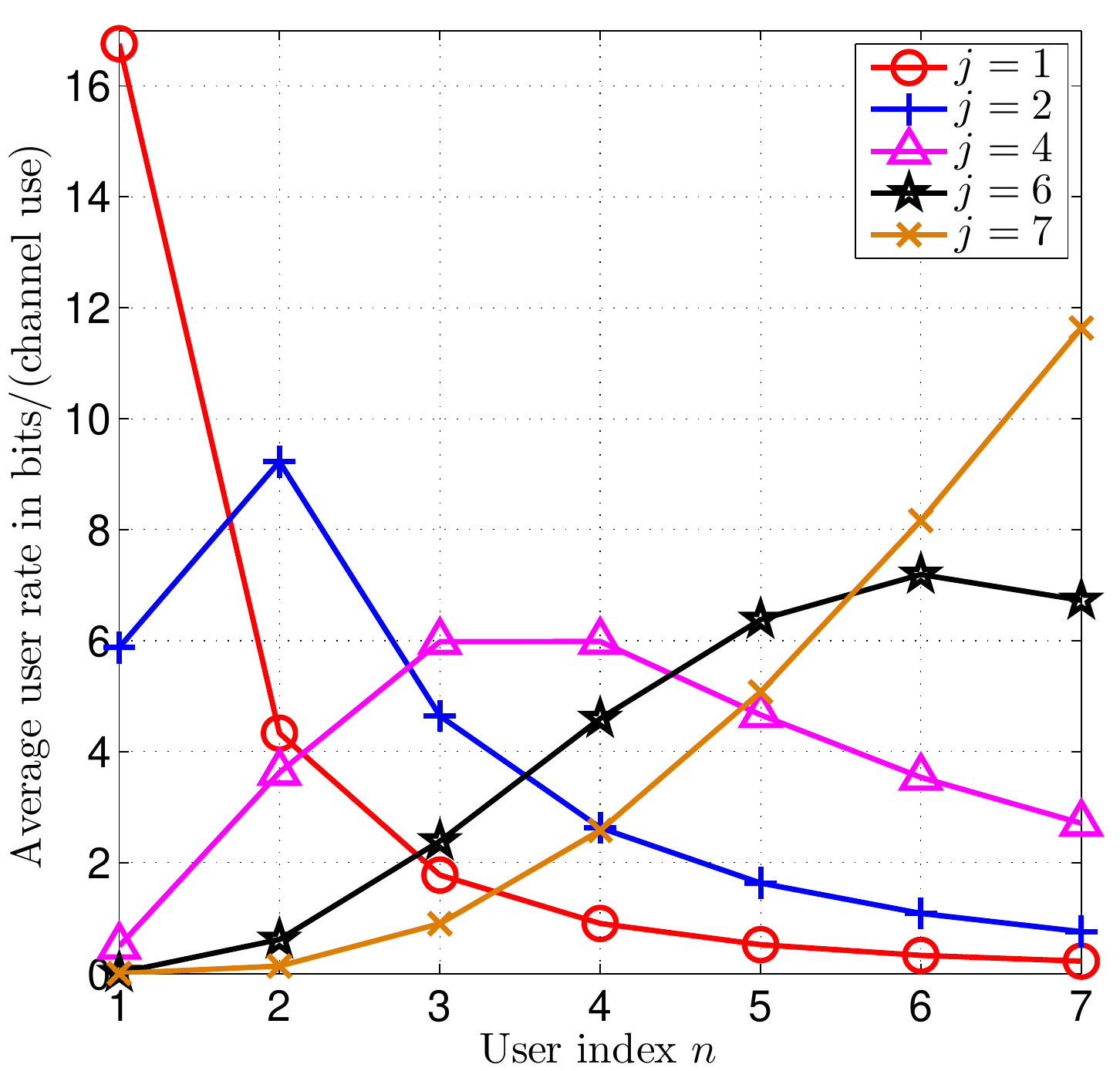}}\hspace{1cm}
 \subfloat[Per-user average harvested energy]{\label{subfig:Publication_curve_Nakagami_ordered_SNR_per_user_Energy}\includegraphics[height=0.27\textheight]{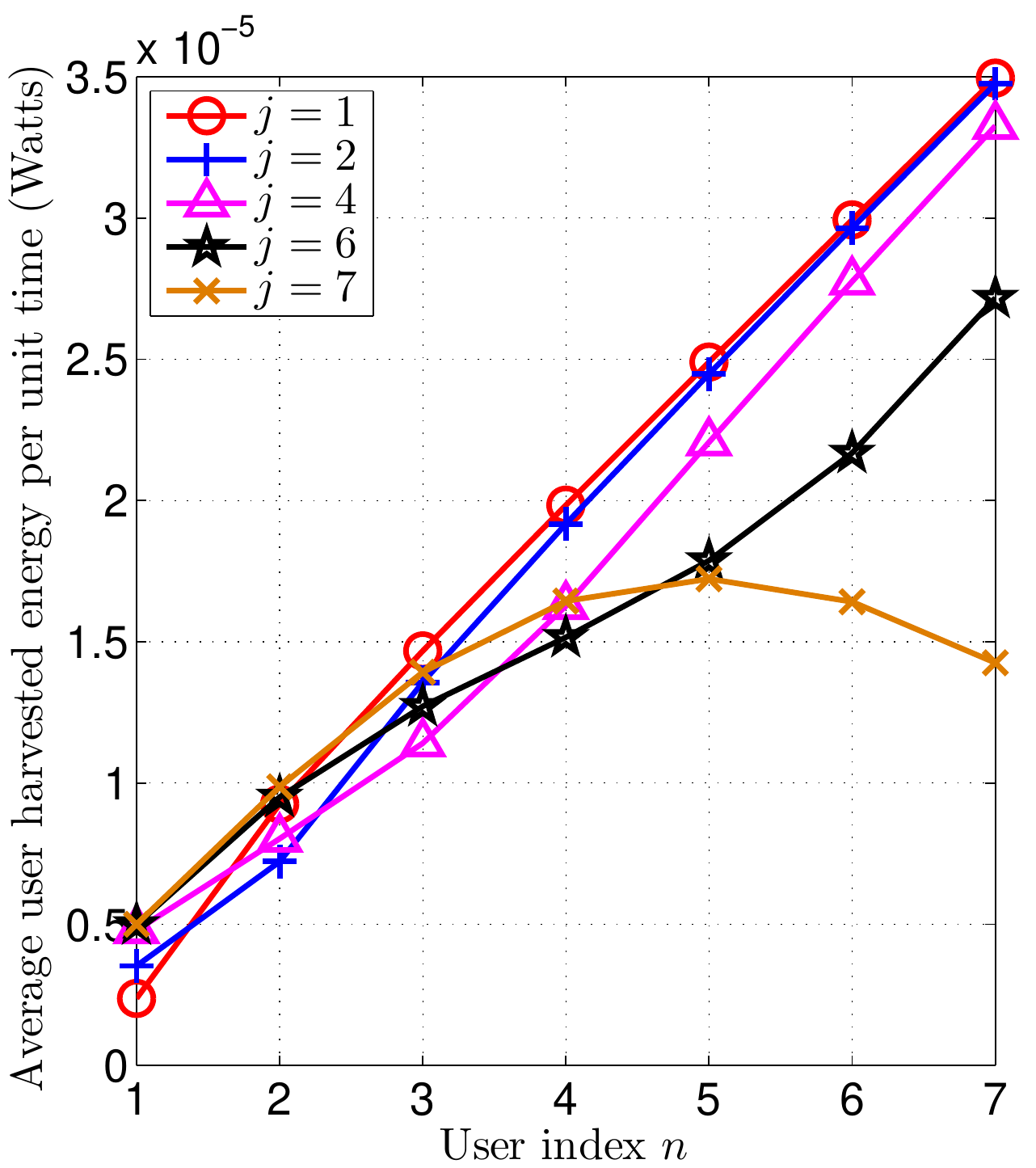}}
\caption{Average per-user rate and harvested energy for order-based SNR scheduling for $N=7$ users over i.n.d. Nakagami-$m$ fading  with $m=3$.}
\label{fig:Publication_curve_Nakagami_ordered_SNR_per_user_Capacity_energy}
\end{figure}

In Fig. \ref{fig:Publication_curve_Nakagami_ordered_SNR_per_user_Capacity_energy}, we show the \emph{per-user} average rate and harvested energy performance for order-based SNR scheduling over i.n.d. Nakagami-$m$ fading channels with $m=3$.  Fig. \ref{fig:Publication_curve_Nakagami_ordered_SNR_per_user_Capacity_energy} reveals that for any order $j$, the scheme does not provide fairness among the users, neither in terms of their data rates nor their harvested energies. The average amount of harvested energy and the average rate of a user depend on its average channel power gain, the selection order $j$ used, and the order of the user's average channel power gain relative to that of the other users. For the assumed values of $\Omega_n=n\times10^{-5}$, $n=1,\ldots,7$, and a selection order of $j$, the user with the $j^{\text{th}}$-ordered $\Omega_n$ achieves the highest rate. Similarly, the amount of energy harvested by a user depends on how often the user is selected and how much energy it can harvest when it is not selected. Hence, the main advantage of the order-based SNR scheme is not in its per-user performance, but rather its sum rate vs. total harvested energy performance which was shown to provide a better R-E tradeoff compared to the proposed fair schemes.

Fig. \ref{fig:Publication_curve_NSNR_ET_Rician} shows the performance of the order-based N-SNR and the order-based ET scheduling schemes over i.n.d. Ricean fading channels with Ricean factor $K=6$. The closed-form analytical results were obtained using Tables \ref{tab:avg_Capacity_NSNR} and \ref{tab:avg_EH_NSNR} together with (\ref{eq:controllable_ET}), (\ref{eq:pn_controllable_ET}) and (\ref{eq:avg_EH_controllable_ET}). It is observed that there is a minor difference between the simulated performance curves and the closed-form expressions which use the exponential approximation of the Marcum-Q function provided in Section \ref{ss:channel_model}.\footnote{The accuracy of the approximation was tested for up to $K=18$, where it was found that the higher the Ricean factor $K$ gets, the looser the approximation is. For example, when $K=18$, the closed-form expressions for the average per-user rate and harvested energy for the order-based N-SNR scheme differ from the simulated ones by at most $0.5\%$ and $0.65\%$, respectively.} Every point in each curve of Figs. \ref{subfig:Publication_curve_NSNR_Rician} and \ref{subfig:Publication_Curve_ET_Ricean} corresponds to the ergodic rate and the average harvested energy of a specific user. Since both order-based N-SNR and ET schemes provide proportional fairness among the users in terms of the average harvested energy, points corresponding to more harvested energy belong to a stronger-channel user. In Fig. \ref{fig:Publication_curve_NSNR_ET_Rician}, we also highlight the R-E curves of the worst- and the best-channel users.  

\begin{figure}[!t]
 \subfloat[Order-based N-SNR scheduling]{\label{subfig:Publication_curve_NSNR_Rician}\includegraphics[width=0.495\textwidth, trim= 0 0 1.2cm 0,clip]{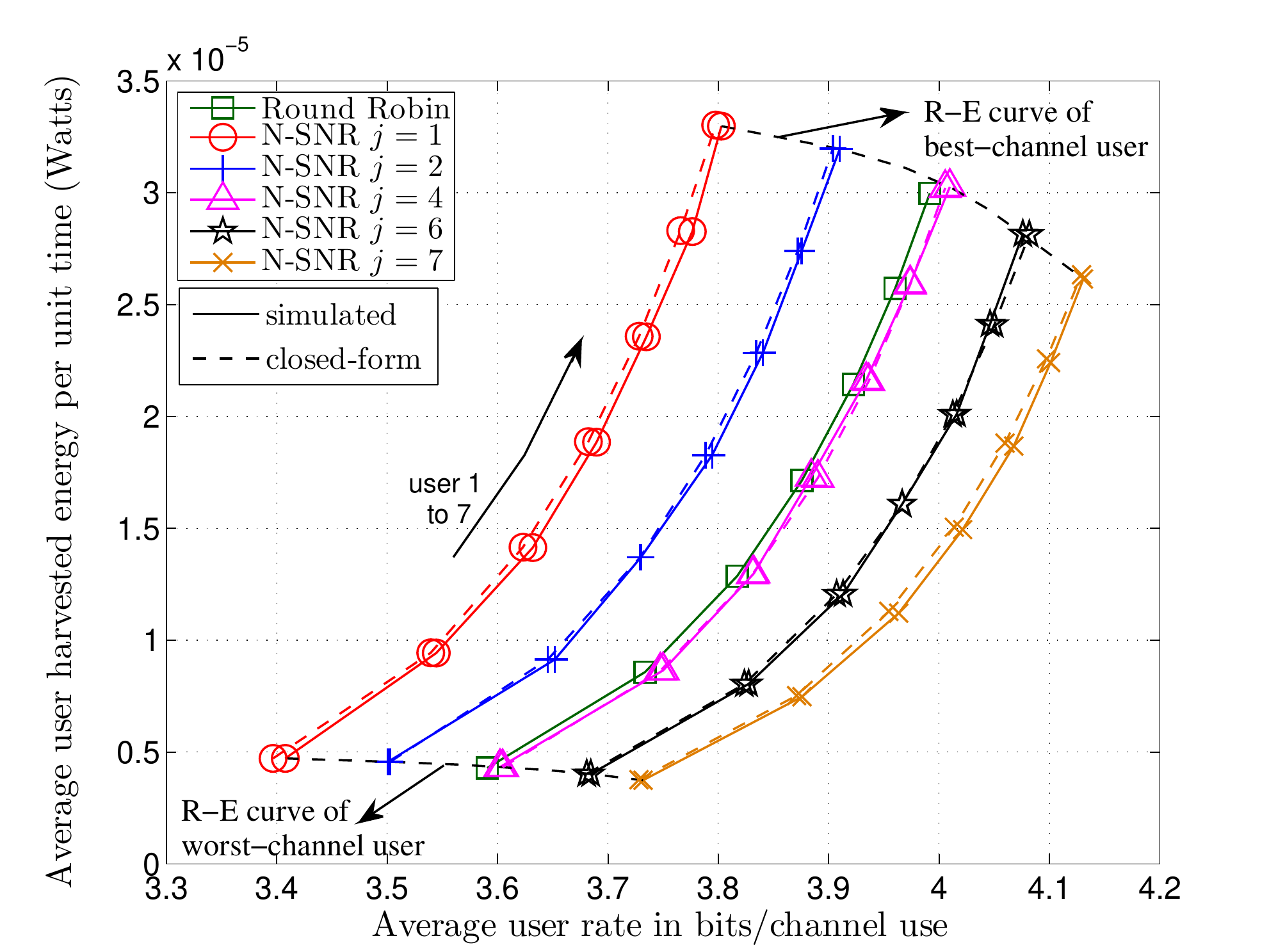}}
 \subfloat[Order-based ET scheduling]{\label{subfig:Publication_Curve_ET_Ricean}\includegraphics[width=0.495\textwidth]{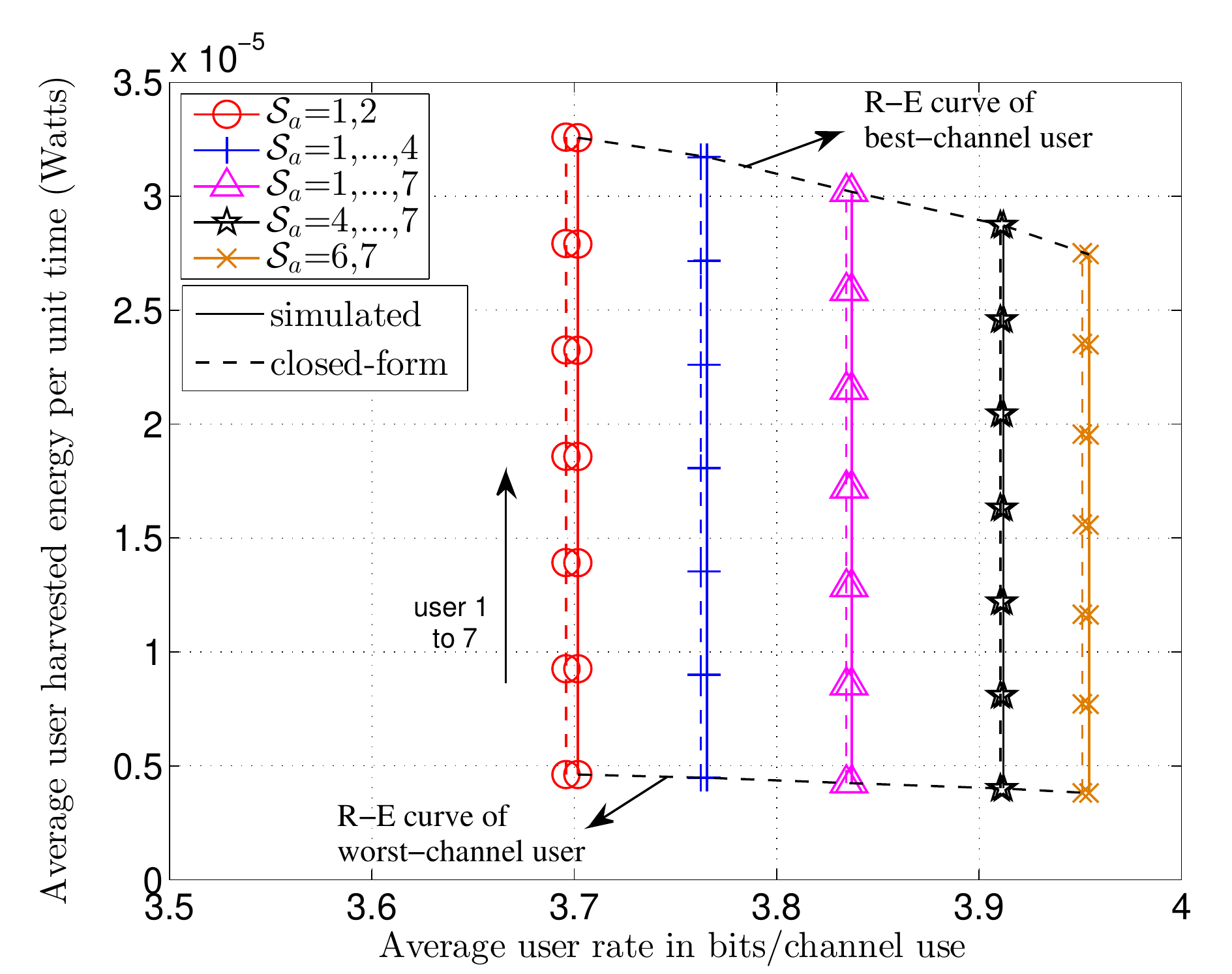}}
\caption{Rate-energy performance of the order-based N-SNR, the RR and the order-based ET scheduling schemes for $N=7$ users over i.n.d. Ricean fading with $K=6$.}
\label{fig:Publication_curve_NSNR_ET_Rician}
\end{figure}
In Fig. \ref{subfig:Publication_curve_NSNR_Rician}, it is observed that both the order-based N-SNR and the RR scheduling schemes achieve proportional fairness in terms of both the ergodic rate and the average amount of harvested energy, since all users are on average scheduled for the same number of time slots. Furthermore, the RR scheme is shown to perform in-between the order-based N-SNR curves as it is neither biased towards energy transfer nor towards information transfer. Moreover, for the order-based N-SNR scheme, by reducing $j$ in integer-steps from $N$ to $1$, we allow the users to harvest more energy at the expense of reducing their ergodic rates. For example, for the best-channel user, reducing $j$ from $N$ to $1$ leads to a $7.94\%$ reduction in rate and a $26.1\%$ increase in harvested energy. 

In Fig. \ref{subfig:Publication_Curve_ET_Ricean}, the order-based ET scheduling scheme is shown to provide all users with ET, and thus is ET-feasible for all considered sets $\Sa$ as can be verified using Theorem \ref{theo:feasibility_conditions}. It is observed that for the same $|\Sa|$, the lower the allowed orders in $\Sa$ get, the higher the achievable average harvested energy for all users at the expense of a reduced ET. By observing the performance of sets $\Sa=\{1,2\}$ and $\Sa=\{6,7\}$ which provide the extreme cases for the R-E tradeoff, we find that going from $\Sa = \{6, 7\}$ to $\Sa = \{1, 2\}$ leads to an increase of $18.6\%$ and $21\%$ in the amount of harvested
energy for the best- and the worst-channel users, respectively, at the expense of only a $6.33\%$ reduction in the ET. The performance curve for $\Sa=\{1,\ldots,N\}$ corresponds to the conventional ET scheduling scheme described in Section \ref{ss:ET}.
\begin{figure}[!t]
\centering
 \subfloat[Average sum rate]{\label{subfig:publication_curve_Weibull_ordered_NSNR_different_num_users_Capacity}\includegraphics[height=0.25\textheight]{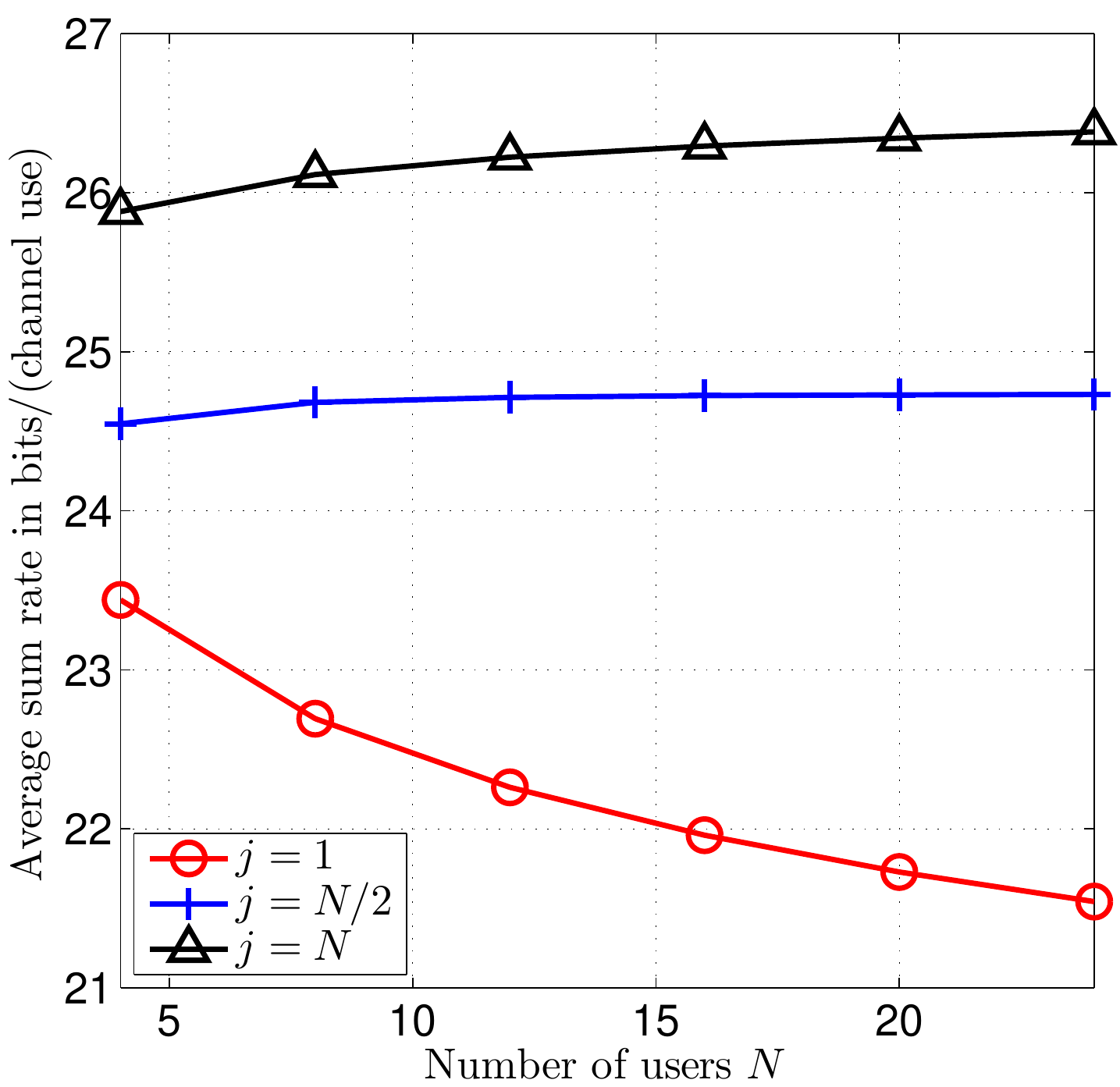}}\hspace{1cm}
 \subfloat[Average total harvested energy]{\label{subfig:publication_curve_Weibull_ordered_NSNR_different_num_users_Energy}\includegraphics[height=0.26\textheight]{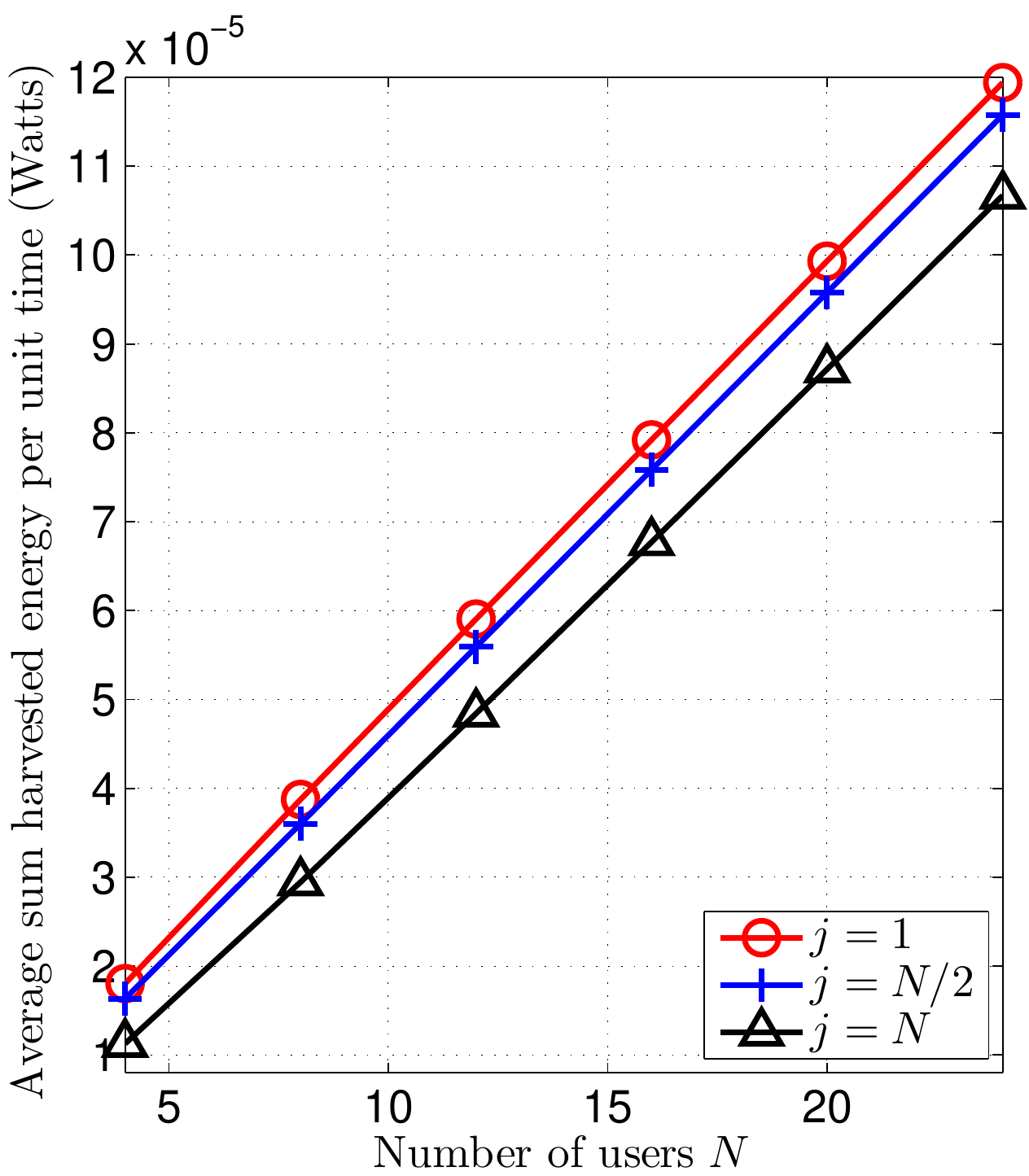}}
\caption{Average sum rate and total harvested energy for N-SNR scheduling for different numbers of users with $\Omega_n=\frac{n}{\frac{1}{N}\sum_{i=1}^N{i}}\times 10^{-5}$ over i.n.d. Weibull fading with $k=1.5$.}
\label{fig:publication_curve_ordered_NSNR_different_num_users_Weibull_sim_only}
\end{figure}
\begin{remark}
The selection order $j$ for the order-based SNR/N-SNR scheme and the set of orders $\Sa$ for the order-based ET scheme can be chosen to satisfy the R-E tradeoff desired by the users, within the limits of the feasible  R-E tradeoffs of the respective schemes. In particular, each user may send a feedback signal  to the AP which indicates its rate and energy needs. If, for example, a user is in high demand of energy, its feedback signal indicates that it needs to be excluded from the selection process to allow it to harvest as much energy as possible. However, it may not be feasible to satisfy the needs of all users, since they may have conflicting requirements. In this case, using the performance results in Tables \ref{tab:avg_Capacity_SNR}-\ref{tab:avg_EH_NSNR}, (\ref{eq:controllable_ET}) and (\ref{eq:avg_EH_controllable_ET}), the AP may choose the order $j$ or the set $\Sa$ which would satisfy the majority of the users.
\end{remark}

Fig. \ref{fig:publication_curve_ordered_NSNR_different_num_users_Weibull_sim_only} shows the effect of the number of users $N$ on the ergodic sum rate and the total harvested energy of the order-based N-SNR scheduling scheme for $j=1,\frac{N}{2},$ and $N$. Simulations were performed for a Weibull fading channel with shape parameter $k=1.5$. Only closed-form analytical results are shown as they perfectly match the simulated ones. For a fair comparison, the average channel power gain of the users is normalized to $10^{-5}$ for all $N$, i.e., $\Omega_n=\frac{n}{\frac{1}{N}\sum_{i=1}^N{i}}\times 10^{-5}$. We observe that for any order $j$,  the total average harvested energy increases with the number of users, since  having more users implies capturing more of the ambient RF energy. However, the effect of the number of users on the sum rate depends on the order $j$. To understand this effect, consider the extreme orders $j\!=\!1$ and $j\!=\!N$. For $j\!=\!N$, more users implies a higher probability for larger maximum N-SNR values, and therefore a higher average rate (MUD gain). However, for $j\!=\!1$, more users implies a higher probability for smaller minimum N-SNR values, and therefore a lower average rate results (MUD loss). For middle orders (e.g., $j=\frac{N}{2}$), there is hardly any gain or loss in the ergodic sum rate for $N\geq8$.
\begin{remark}
We note that for lower data rates, the clock frequency and the supply voltage of the UT circuits can be scaled down (a technique known as dynamic voltage scaling). This leads to a cubic reduction in power consumption because dynamic power dissipation depends on the square of the supply voltage and linearly on the frequency ($P_c\propto V^2 f$) \cite[p. 88]{wang2009electronic}. Hence, when the users can tolerate low data rates, the selection order $j$ of the order-based SNR/N-SNR scheduling or the orders in $\Sa$ for the order-based ET scheduling can be chosen small to allow the users to harvest more RF energy and simultaneously reduce their power consumption.
\end{remark}

\section{Conclusion}
\label{s:conclusion}
In this paper, new online multi-user scheduling schemes that enable the control of the tradeoff between the sum rate and the average amount of harvested energy are proposed for SWIPT systems. The proposed order-based N-SNR/ET scheduling schemes additionally ensure long-term proportional/equal-throughput fairness (for ET-feasible scenarios) in terms of the users' rates. Furthermore, both the order-based N-SNR and the order-based ET scheduling schemes ensure proportional fairness in terms of the amount of energy harvested by the users. We applied order statistics theory to analyze the per-user ergodic achievable rate and average amount of harvested energy of the proposed schemes and provided them  in closed-form for i.n.d. Nakagami-$m$, Ricean, Weibull, and Rayleigh fading channels. Furthermore, feasibility conditions required for the order-based ET scheme to achieve ET for all users were derived. Our results reveal that the lower the selection order for the order-based SNR/N-SNR scheme, or the lower the orders in $\Sa$ for the order-based ET scheme, the higher the average total amount of harvested energy at the expense of a reduced ergodic sum rate.  

\appendices
\vspace{-0.47cm}
\section{Analysis of order-based SNR scheduling}
\label{app:proof_EH_C_SNR} 
\subsection{Nakagami-$m$ Fading}
\label{app:EH_C_Nakagami_proof_SNR} 
Using the cdf of the channel power gain for Nakagami-$m$ fading in Table \ref{tab:pdfs_cdfs_hn}, the products $\prod\limits_{l=1}^{j-1}{F_{h_{i_l}}(x)}$ and $\prod\limits_{l=j}^{N-1}{\left(1-F_{h_{i_l}}(x)\right)}$, which appear in the per-user ergodic rate and average harvested energy in (\ref{eq:Order_SNR_per_user_Ergodic_capacity}) and (\ref{eq:Order_SNR_per_user_EH}), can be written as sums of products given by
\begin{equation}
\prod\limits_{l=1}^{j-1}{F_{h_{i_l}}(x)}=\sum\limits_{r=0}^{j-1}(-1)^r\sum\limits_{(c_1,\ldots,c_r)\in\mathcal{C}_{r,j}}\,\,\sum\limits_{s_1,\ldots,s_{r}=0}^{m-1}\,\,\frac{\prod\limits_{t=1}^r \lambda_{c_t}^{s_t}}{\prod\limits_{t=1}^r s_t!}\,x^{\sum\limits_{t=1}^r s_t}\e^{-\!\!\sum\limits_{t=1}^r\lambda_{c_t}x}
\label{eq:prod_cdfs_Nakagami}
\end{equation}
and \vspace{-0.5cm}
\begin{equation}
\prod\limits_{l=j}^{N-1}{\left(1-F_{h_{i_l}}(x)\right)}=\sum\limits_{s_j,\ldots,s_{N-1}=0}^{m-1}\,\,\frac{\prod\limits_{l=j}^{N-1} \lambda_{i_l}^{s_l}}{\prod\limits_{l=j}^{N-1} s_l!}\,x^{\sum\limits_{l=j}^{N-1} s_l}\e^{-\!\!\sum\limits_{l=j}^{N-1}\lambda_{i_l}x},
\label{eq:prod_1_cdfs_Nakagami}
\end{equation} 
where $\mathcal{C}_{r,j}$ is the set of all $\binom{j-1}{r}$ combinations $(c_1,\ldots,c_r)$ of $(i_1,\dots,i_{j-1})$, with $(i_1,\dots,i_{N-1})$ being one permutation in $\mathcal{P}_n$. Using (\ref{eq:Order_SNR_per_user_Ergodic_capacity}), (\ref{eq:prod_cdfs_Nakagami}), and (\ref{eq:prod_1_cdfs_Nakagami}), the per-user ergodic rate reduces to
\begin{equation}
\begin{aligned}
\E\left[C_{j,\U_n}\right]=\frac{\lambda_n^m}{\ln(2)\Gamma(m)}\sum\limits_{\mathcal{P}_n}\sum\limits_{r=0}^{j-1}&(-1)^r\sum\limits_{\mathcal{C}_{r,j}}\sum\limits_{s_1,\ldots,s_{r}=0}^{m-1}\,\,\sum\limits_{s_j,\ldots,s_{N-1}=0}^{m-1}\frac{\prod\limits_{t=1}^r \lambda_{c_t}^{s_t}}{\prod\limits_{t=1}^r s_t!}\frac{\prod\limits_{l=j}^{N-1} \lambda_{i_l}^{s_l}}{\prod\limits_{l=j}^{N-1} s_l!}\\
&\int\limits_0^\infty{\ln\left(1+\bar{\gamma}x\right)x^{\left(\sum\limits_{t=1}^r s_t+m-1+\sum\limits_{l=j}^{N-1} s_l\right)}\e^{-\left(\sum\limits_{t=1}^r\lambda_{c_t}+\lambda_n+\sum\limits_{l=j}^{N-1}\lambda_{i_l}\right)x}\dd x}.
\end{aligned}
\label{eq:Order_SNR_per_user_Ergodic_capacity_step1}
\end{equation}  
To rewrite (\ref{eq:Order_SNR_per_user_Ergodic_capacity_step1}) in a more compact form, we define the set $\mathcal{U}_{n,r}$ as
\begin{equation}
\begin{aligned}
\mathcal{U}_{n,r}=\{&(u_1,\ldots,u_{N-j+r})\mid(u_1,\ldots,u_r)=(c_1,\ldots,c_r) \textrm{ and } (u_{r+1},\ldots,u_{N-j+r})=(i_j,\ldots,i_{N-1}),\\ & \forall \textrm{ permutations } (i_1,\ldots,i_{N-1})\in \mathcal{P}_n \textrm{ and $\forall$ combinations } (c_1,\ldots,c_r) \textrm{ of } (i_1,\ldots,i_{j-1})\}.
\end{aligned}
\label{eq:set_Unr}
\end{equation} 
Thus, $\left|\mathcal{U}_{n,r}\right|=\left| \mathcal{P}_n\right|\left|\mathcal{C}_{r,j}\right|=\binom{N-1}{j-1}\binom{j-1}{r}$. In addition, we define the set $\mathcal{S}_{m,r}$ as
\begin{equation}
\mathcal{S}_{m,r}=\left\{(s_1,\ldots,s_{N-j+r})\mid s_t\in\{0,\ldots,m-1\},\,\forall t=\{1,\ldots,N-j+r\}\right\},
\label{eq:set_Sm}
\end{equation}
where $\left|\mathcal{S}_{m,r}\right|=m^{N-j+r}$. Hence, (\ref{eq:Order_SNR_per_user_Ergodic_capacity_step1}) can be written as $\E\left[C_{j,\U_n}\right]=$
\begin{equation}
\frac{\lambda_n^m}{\ln(2)\Gamma(m)}\sum\limits_{r=0}^{j-1}(-1)^r\sum\limits_{\mathcal{U}_{n,r}}\sum\limits_{\mathcal{S}_{m,r}}\frac{\prod\limits_{t=1}^{N-j+r} \lambda_{u_t}^{s_t}}{\prod\limits_{t=1}^{N-j+r} s_t!}\underbrace{\int\limits_0^\infty{\ln\left(1+\bar{\gamma}x\right)x^{\left(m-1+\sum\limits_{t=1}^{N-j+r}s_t\right)}\e^{-\left(\lambda_n+\sum\limits_{t=1}^{N-j+r}\lambda_{u_t}\right)x}\dd x}}_{I_1}.
\label{eq:Order_SNR_per_user_Ergodic_capacity_step2}
\end{equation} 
Integral $I_1$ is in the form of \cite[eq. 2.6.23.4]{prudnikov1986integrals}, hence
\begin{equation}
I_1=\frac{1}{\bar{\gamma}^{\alpha}}\frac{\pi}{\alpha\sin(\alpha\pi)}\, { }_1F_1\left(\begin{aligned} &\alpha \\ &\alpha+1\end{aligned};\frac{\zeta}{\bar{\gamma}}\right)-\Gamma(\alpha)\left(\zeta\right)^{-\alpha}\left[\ln\frac{\zeta}{\bar{\gamma}}-\psi(\alpha)-\frac{\zeta}{\bar{\gamma}(1-\alpha)}\, {}_2F_2\left(\begin{aligned}&1,1\\ &2,2-\alpha\end{aligned};\frac{\zeta}{\bar{\gamma}}\right)\right],
\label{eq:Int_I1_Nakagami_SNR}
\end{equation}
where $\alpha\definedas m+\!\!\sum\limits_{t=1}^{N-j+r}\!s_t$, $\zeta=\lambda_n+\!\!\sum\limits_{t=1}^{N-j+r}\!\lambda_{u_t}$. Substituting (\ref{eq:Int_I1_Nakagami_SNR}) in (\ref{eq:Order_SNR_per_user_Ergodic_capacity_step2}), the ergodic per-user rate can be written in closed-form for any SNR. However, RF EH usually targets short-range application scenarios. Thus, we provide approximate results for the high SNR regime, i.e., high  $\bar{\gamma}$. In this case, the argument $\frac{\zeta}{\bar{\gamma}}$ in the hypergeometric functions in $I_1$ tends to zero, and consequently both hypergeometric functions tend to 1. Moreover, the terms containing $\frac{1}{\bar{\gamma}}$ tend to zero, and $I_1$ reduces to $I_1=\Gamma(\alpha)(\zeta)^{-\alpha}\left(\psi(\alpha)+\ln\frac{\bar{\gamma}}{\zeta}\right)$. This result can be also obtained after approximating the term $\ln\left(1+\bar{\gamma}x\right)$ of $I_1$ in (\ref{eq:Order_SNR_per_user_Ergodic_capacity_step2}) as $\ln\left(\bar{\gamma}x\right)$ and using \cite[eq. 2.3.3.1]{prudnikov1986integrals} and \cite[eq. 2.6.21.2]{prudnikov1986integrals}. Hence, the resulting rate is a lower bound for the achievable rate which becomes tight for high SNR. Thus, at high SNR, the ergodic per-user rate for order-based SNR scheduling over Nakagami-$m$ fading channels reduces to the expression provided in Table \ref{tab:avg_Capacity_SNR}. 

Similarly, using (\ref{eq:Order_SNR_per_user_EH}), (\ref{eq:prod_cdfs_Nakagami}), (\ref{eq:prod_1_cdfs_Nakagami}), and the set definitions in (\ref{eq:set_Unr}) and (\ref{eq:set_Sm}), the per-user average harvested energy reduces to $\E\left[EH_{j,\U_n}\right]=$\vspace{-0.2cm}
\begin{equation}
\eta P\left(\Omega_n-\frac{\lambda_n^m}{\Gamma(m)}\sum\limits_{r=0}^{j-1}(-1)^r\sum\limits_{\mathcal{U}_{n,r}}\sum\limits_{\mathcal{S}_{m,r}}\frac{\prod\limits_{t=1}^{N-j+r} \lambda_{u_t}^{s_t}}{\prod\limits_{t=1}^{N-j+r} s_t!}\underbrace{\int\limits_0^\infty x^{\left(m+\sum\limits_{t=1}^{N-j+r} s_t\right)}\e^{-\left(\lambda_n+\sum\limits_{t=1}^{N-j+r}\lambda_{u_t}\right)x} \dd x}_{I_2}\right),
\label{eq:EH_peruser_SNR_step2_Nakagami}
\end{equation}
where  $I_2$ can be simplified using \cite[eq. 2.3.3.1]{prudnikov1986integrals} as
\begin{equation}
I_2=\Gamma\left(m+1+\sum\limits_{t=1}^{N-j+r} s_t\right) \left(\lambda_n+\sum\limits_{t=1}^{N-j+r}\lambda_{u_t}\right)^{-\left(m+1+\sum\limits_{t=1}^{N-j+r} s_t\right)}
\label{eq:I2_Nakagami_SNR}
\end{equation}
and the average per-user harvested energy reduces to the expression provided in Table \ref{tab:avg_EH_SNR}.
\subsection{Weibull Fading}
Using the pdf and the cdf of the channel power gain for Weibull fading in Table \ref{tab:pdfs_cdfs_hn},  the product term $\prod\limits_{l=1}^{j-1}{F_{h_{i_l}}(x)}f_{h_n}(x)\prod\limits_{l=j}^{N-1}{\left(1-F_{h_{i_l}}(x)\right)}$ which appears in the per-user ergodic rate and average harvested energy in (\ref{eq:Order_SNR_per_user_Ergodic_capacity}) and (\ref{eq:Order_SNR_per_user_EH}) can be written as sums of products given by
\begin{equation}
\prod\limits_{l=1}^{j-1}{F_{h_{i_l}}(x)}f_{h_n}(x)\prod\limits_{l=j}^{N-1}{\left(1-F_{h_{i_l}}(x)\right)}=k\lambda_n^k \sum\limits_{r=0}^{j-1}(-1)^r \sum\limits_{(c_1,\ldots,c_r)\in\mathcal{C}_{r,j}} x^{k-1} \e^{-\left(\sum\limits_{t=1}^{r}\lambda_{c_t}^k + \lambda_n^k+ \sum\limits_{l=j}^{N-1}\lambda_{i_l}^k\right)x^k}.
\label{eq:prod_cdfs_Weibull}
\end{equation}
Using the set definition in (\ref{eq:set_Unr}), the ergodic rate of user $n$ in  (\ref{eq:Order_SNR_per_user_Ergodic_capacity}) reduces to
\begin{equation}
\E\left[C_{j,\U_n}\right]=\frac{k\lambda_n^k}{\ln(2)}\sum\limits_{r=0}^{j-1}(-1)^r\sum\limits_{\mathcal{U}_{n,r}}\underbrace{\int\limits_0^\infty{\ln\left(1+\bar{\gamma}x\right)x^{k-1}\e^{-\left(\lambda_n^k+\sum\limits_{t=1}^{N-j+r}\lambda_{u_t}^k\right)x^k}\dd x}}_{I_3}.
\label{eq:Cap_Un_Weibull_SNR_step}
\end{equation}
Integral $I_3$ can be obtained in closed-form at high SNR by approximating $\ln(1+\bar{\gamma}x)$ by $\ln(\bar{\gamma}x)$. Using \cite[eq. 2.3.18.2]{prudnikov1986integrals} and \cite[eq. 2.6.21.2]{prudnikov1986integrals}, $I_3$ reduces to
\begin{equation}
I_3\approx\frac{1}{k\left(\lambda_n^k+\sum\limits_{t=1}^{N-j+r}\lambda_{u_t}^k\right)}\left(\ln\left(\bar{\gamma}\right)-\frac{1}{k}\left(\ln\left(\lambda_n^k+\sum\limits_{t=1}^{N-j+r}\lambda_{u_t}^k\right)+\textrm{C}\right)\right)
\label{eq:}
\end{equation}
and the average per-user rate reduces to the expression provided in Table \ref{tab:avg_Capacity_SNR}.

Using (\ref{eq:Order_SNR_per_user_EH}), (\ref{eq:prod_cdfs_Weibull}), and the set definition in (\ref{eq:set_Unr}), the average harvested energy of user $n$ can be obtained as \vspace{-0.5cm}
\begin{equation}
\E\left[EH_{j,\U_n}\right]=\eta P\Bigg(\Omega_n-k\lambda_n^k\sum\limits_{r=0}^{j-1}(-1)^r\sum\limits_{\mathcal{U}_{n,r}}\underbrace{\int\limits_0^\infty x^{k}\e^{-\left(\lambda_n^k+\sum\limits_{t=1}^{N-j+r}\lambda_{u_t}^k\right)x^k} \dd x}_{I_4}\Bigg),
\label{eq:EH_peruser_SNR_step2_Weibull}
\end{equation}
where $I_4$ can be solved using  \cite[eq. 2.3.18.2]{prudnikov1986integrals} as $I_4=\left(\lambda_n^k+\sum\limits_{t=1}^{N-j+r}\lambda_{u_t}^k\right)^{-\frac{k+1}{k}}\Gamma\left(\frac{k+1}{k}\right)$ and the average per-user harvested energy reduces to the expression in Table \ref{tab:avg_EH_SNR}.

\section{Analysis of order-based N-SNR scheduling}
\label{app:proof_EH_C_NSNR}
\subsection{Nakagami-$m$ Fading}
\label{app:EH_C_Nakagami_proof} 
Using  (\ref{eq:pdf_order_statistics}), (\ref{eq:Ergodic_capacity_NSNR}), the pdf and cdf of the normalized channel power gain from Table \ref{tab:pdfs_cdfs_hn} with $\Omega_n=1$, and the binomial theorem for $\left(F_{X}(x)\right)^{j-1}$, the ergodic rate of user $n$ reduces to
  \begin{equation}
\E[C_{j,\text{U}_n}]=\frac{m^{m}}{\Gamma(m)}\binom{N-1}{j-1}\sum\limits_{\kappa=0}^{j-1} (-1)^\kappa \binom{j-1}{\kappa}  \int\limits_0^\infty\log_2\left(1+\bar{\gamma}_nx\right) x^{m-1} \e^{-m x} \left(\frac{\Gamma(m,mx)}{\Gamma(m)}\right)^{N-j+\kappa} \dd x.
\label{eq:step_1_capacity_Nakagami}
\end{equation}
By replacing the exponent $N-j+\kappa$ by $l$, the summation variable $l$ goes from $N-j$ to $N-1$, and $\binom{j-1}{\kappa}=\binom{j-1}{l-N+j}=\binom{j-1}{N-l-1}$. So, the average rate can be rewritten as $\E[C_{j,\text{U}_n}]=$
\begin{equation}
\frac{m^{m}}{\Gamma(m)}\binom{N-1}{j-1}\sum\limits_{l=N-j}^{N-1} (-1)^{l-N+j} \binom{j-1}{N-l-1} \underbrace{\int\limits_0^\infty\log_2\left(1+\bar{\gamma}_nx\right) x^{m-1} \e^{-m x} \left(\frac{\Gamma(m,mx)}{\Gamma(m)}\right)^{l} \dd x}_{I_5}.
\label{eq:step_2_capacity_Nakagami}
\end{equation}
For integer values of fading parameter $m$, the upper normalized incomplete Gamma function $\frac{\Gamma(m,mx)}{\Gamma(m)}$ can be replaced by its series representation given by $\frac{\Gamma(m,mx)}{\Gamma(m)}=\e^{-mx}\sum\limits_{s=0}^{m-1}\frac{m^s x^s}{s!}$. Using the multinomial theorem, $\left(\sum\limits_{s=0}^{m-1}\frac{m^s x^s}{s!}\right)^l$ can be replaced by $l!\sum\limits_{\mathcal{I}_{m,l}}\left(\prod\limits_{s=0}^{m-1}\frac{\left(\frac{m^sx^s}{s!}\right)^{i_s}}{i_s!}\right)$, where we define $\mathcal{I}_{m,l}$ as 
\begin{equation}
\mathcal{I}_{m,l}=\left\{(i_0,\ldots,i_{m-1})\mid i_s\in\{0,\ldots,l\},\forall s\in\{0,\ldots, m-1\},\text{ and } \sum_{s=0}^{m-1}i_s=l\right\}.
\label{eq:I_m_l}
\end{equation} 
Thus, integral $I_5$ reduces to
\begin{equation}
I_5=l!\sum\limits_{\mathcal{I}_{m,l}}\left(\prod\limits_{s=0}^{m-1}\frac{\left(\frac{m^s}{s!}\right)^{i_s}}{i_s!}\right)\underbrace{\int\limits_0^\infty\log_2\left(1+\bar{\gamma}_nx\right) \,\e^{-m(1+l) x} \,x^{m-1+\sum\limits_{s=0}^{m-1}s\,i_s}\dd x}_{I_6}.
\label{eq:I_5}
\end{equation}
Integral $I_6$ has the same form as $I_1$ in (\ref{eq:Order_SNR_per_user_Ergodic_capacity_step2}). Thus, a similar analysis can be applied resulting in exact as well as lower bound expressions for the per-user ergodic rate of order-based SNR scheduling over Nakagami-$m$ fading. 

The average per-user harvested energy can be derived from (\ref{eq:EH_Un_NSNR}), where similar to (\ref{eq:step_2_capacity_Nakagami}), $\frac{1}{N}\E[X_{(j)}]$ can be written as
\begin{equation}
\frac{1}{N}\E[X_{(j)}]=\frac{m^{m}}{\Gamma(m)}\binom{N-1}{j-1}\sum\limits_{l=N-j}^{N-1} (-1)^{l-N+j} \binom{j-1}{N-l-1} \underbrace{\int\limits_0^\infty x^{m} \e^{-m x} \left(\e^{-mx}\sum\limits_{s=0}^{m-1}\frac{m^s x^s}{s!}\right)^{l} \dd x}_{I_7}.
\label{eq:EH_UI_NSNR_IND_Nakagami_step1}
\end{equation}
Applying the multinomial theorem to $\left(\sum\limits_{s=0}^{m-1}\frac{m^s x^s}{s!}\right)^{l}$ for integer $m$, $I_7$ reduces to
\begin{equation}
I_7= l!\sum\limits_{\mathcal{I}_{m,l}}\left(\prod\limits_{s=0}^{m-1}\frac{\left(\frac{m^s}{s!}\right)^{i_s}}{i_s!}\right)\underbrace{\int\limits_{0}^{\infty}x^{m+\sum\limits_{s=0}^{m-1}s\,i_s} \e^{-m(1+l)x} \dd x}_{I_8}.
\end{equation}
Using \cite[eq. 2.3.3.1]{prudnikov1986integrals}, $I_8$ reduces to
\begin{equation}
I_8=\left(\frac{1}{m(1+l)}\right)^{m+1}\left(\prod\limits_{s=0}^{m-1}\left(\frac{1}{m(1+l)}\right)^{s\,i_s}\right) \Gamma\left(m+1+\sum\limits_{s=0}^{m-1}s i_s\right).
\label{eq:EH_Nakagami_I8}
\end{equation}
Combining  (\ref{eq:EH_Un_NSNR}) and (\ref{eq:EH_UI_NSNR_IND_Nakagami_step1})-(\ref{eq:EH_Nakagami_I8}), the average per-user harvested energy for order-based N-SNR scheduling over Nakagami-$m$ fading reduces to the expression in Table \ref{tab:avg_EH_NSNR}.
\subsection{Weibull Fading}
Using  (\ref{eq:pdf_order_statistics}), (\ref{eq:Ergodic_capacity_NSNR}), and  the pdf and cdf of the normalized channel power gain from Table \ref{tab:pdfs_cdfs_hn} with $\Omega_n=1$, the ergodic rate  of user $n$ is given by
\begin{equation}
\E\left[C_{j,\text{U}_n}\right]=\frac{1}{\ln(2)}\binom{N-1}{j-1}\underbrace{\int\limits_0^\infty\ln(\bar{\gamma}_nx)f_X(x) F_X(x)^{j-1}(1-F_X(x))^{N-j} \dd x}_{I_9},
\label{eq:weibull_C_Un_start}
\end{equation}
where at high SNR $\ln(1+\bar{\gamma}_nx)\approx\ln(\bar{\gamma}_nx)$ is used, hence the resulting rate is a lower bound for the achievable rate which becomes tight at high SNR. Substituting $1-F_X(x)$ by $u$, i.e., $x=\frac{1}{\Gamma\left(1+\frac{1}{k}\right)}\left(\ln\left(\frac{1}{u}\right)\right)^{\frac{1}{k}}$ and using the binomial theorem for $(1-u)^{j-1}$, $I_9$ reduces to
\begin{equation}
\begin{aligned}
I_9&=\sum\limits_{l=0}^{j-1}(-1)^l\binom{j-1}{l}\int\limits_0^1\ln\left(\frac{\bar{\gamma}_n}{\Gamma\left(1+\frac{1}{k}\right)}\left(\ln\left(\frac{1}{u}\right)\right)^{\frac{1}{k}}\right) u^{N-j+l} \dd u\\
&=\sum\limits_{l=0}^{j-1}(-1)^l\binom{j-1}{l}\frac{1}{N-j+l+1}\left(\ln\left(\frac{\bar{\gamma}_n}{\Gamma\left(1+\frac{1}{k}\right)}\right)-\frac{1}{k}\left(\text{C}+\ln(N-j+l+1)\right)\right),
\end{aligned}
\label{eq:weibull_NSNR_C_Un_step}
\end{equation}
where we used the integral in \cite[eq. 4.325.8]{jeffrey2007table}.
Using (\ref{eq:weibull_C_Un_start}) and (\ref{eq:weibull_NSNR_C_Un_step}), the average per-user rate for  order-based N-SNR scheduling over Weibull fading reduces after simple manipulations to the expression in Table \ref{tab:avg_Capacity_NSNR}.

In order to compute the average per-user harvested energy in (\ref{eq:EH_Un_NSNR}), we use (\ref{eq:pdf_order_statistics}) to write $
\frac{1}{N}\E\left[X_{(j)}\right]=\binom{N-1}{j-1}\int\limits_0^\infty x f_X(x) F_X(x)^{j-1}(1-F_X(x))^{N-j} \dd x$.
Using again the substitution $u=1-F_X(x)$ and the binomial theorem, $\frac{1}{N}\E\left[X_{(j)}\right]$ reduces to 
\begin{equation}
\begin{aligned}
\frac{1}{N}\E\left[X_{(j)}\right]&=\binom{N-1}{j-1}\frac{1}{\Gamma\left(1+\frac{1}{k}\right)}\sum\limits_{l=0}^{j-1}(-1)^l\binom{j-1}{l}\int\limits_0^1\left(\ln\left(\frac{1}{u}\right)\right)^{\frac{1}{k}}u^{N-j+l} \dd u\\
&=\binom{N-1}{j-1}\sum\limits_{l=0}^{j-1}(-1)^l\binom{j-1}{l}\left(N-j+l+1\right)^{-\left(1+\frac{1}{k}\right)},
\label{eq:weibull_EH_Un_step2}
\end{aligned}
\end{equation}
where the integral was solved using \cite[eq. 4.272.6]{jeffrey2007table}. Combining (\ref{eq:weibull_EH_Un_step2}) and (\ref{eq:EH_Un_NSNR}), the average per-user harvested energy for order-based N-SNR scheduling over Weibull fading given in Table \ref{tab:avg_EH_NSNR} is obtained. 

\section{Feasibility Conditions for order-based ET Scheduling}
\label{app:feasibility_conditions_proof}
Order-based ET scheduling may fail to provide all users with ET for one of the following reasons:
\begin{enumerate}
	\item Some user $n$ is required to be scheduled more often than possible. That is $\exists n: p_n>\frac{|\Sa|}{N} \overset{\text{from (\ref{eq:pn_pn_conditioned})}}{\equiv} \pr(\U_n|O_n\in\Sa)>1$. Thus, the first feasibility condition follows.
	\item For certain combinations of users in $\Sa$, the sum of the required probabilities that one of them accesses the channel exceeds one. That is, $\exists$ a combination $(k_1,\ldots,k_{|\Sa|})$ of $\{1,\ldots,N\}$, for which $\sum_{l=1}^{|\Sa|}\pr(\U_{k_l}|\Sa=\{O_{k_1},\ldots,O_{k_{|\Sa|}}\})>1$.
\end{enumerate}
To find a simple condition for the second case, we first synthesize $\text{Pr}(\U_n|O_n\in\Sa)$ using the law of total probability as 
\begin{equation}
\text{Pr}(\U_n|O_n\!\in\!\Sa)\!=\!\frac{1}{\binom{N-1}{|\Sa|-1}}\sum\limits_{\mathcal{C}'_n}\pr(\U_n|\Sa\!=\!\{O_n,O_{i_1}\!,\!\ldots\!,O_{i_{|\Sa|-1}\!}\}), 
\end{equation}
where $\mathcal{C}'_n$ is the set of all $\binom{N-1}{|\Sa|-1}$ combinations $(i_1,\ldots,i_{|\Sa|-1})$ from $\{1,\ldots\!,n\!-\!1\!,n\!+\!1,N\!\}$. Thus, from (\ref{eq:pn_pn_conditioned}),
\begin{equation}
\binom{N}{|\Sa|}p_n=\sum\limits_{\mathcal{C}'_n}\pr(\U_n|\Sa=\{O_n,O_{i_1},\ldots,O_{i_{|\Sa|-1}}\})
\label{eq:p_n_synthesized}
\end{equation}
holds for all $n\in\{1,\ldots,N\}$. In order to check that 
\begin{equation}
\sum\limits_{l=1}^{|\Sa|}\pr(\U_{k_l}|\Sa=\{O_{k_1},\ldots,O_{k_{|\Sa|}}\})=1
\label{eq:conditional_prob_must_1}
\end{equation}
holds for all combinations $(k_1,\ldots,k_{|\Sa|})$  drawn from  $\{1,\ldots,N\}$, we observe that adding $|\Sa|$ equations of (\ref{eq:p_n_synthesized}) for users of indices $n=k_1,\ldots,k_{|\Sa|}$ results in 
\begin{equation}\binom{N}{|\Sa|}\sum\limits_{l=1}^{|\Sa|}p_{k_l}=\sum\limits_{l=1}^{|\Sa|}\pr(\U_{k_l}|\Sa=\{O_{k_1},\ldots,O_{k_{|\Sa|}}\})+\ldots.
\end{equation}
Hence, applying (\ref{eq:conditional_prob_must_1}) and limiting every remaining probability term to $1$, the whole summation is limited to 
\begin{equation}
\binom{N}{|\Sa|}\sum\limits_{l=1}^{|\Sa|} p_{k_{l}} \leq \hspace{-0.3cm}\underbrace{\binom{N-1}{|\Sa|-1}}_{\begin{aligned}&\text{total number of}\\[-0.5em] & \text{probabilities per}\\[-0.5em] & \text{eq. in (\ref{eq:p_n_synthesized}})\end{aligned}}\hspace{0.1cm}\underbrace{|\Sa|}_{\begin{aligned}&\text{number of} \\[-0.5em] &\text{eqs. added}\end{aligned}}+\underbrace{(1-|\Sa|)}_{\begin{aligned}&\text{replacing the number} \\[-0.5em] &\text{of probability terms}\\[-0.5em] &\text{that add up to 1 by 1}\end{aligned}}.
\label{eq:feasibility_condition_Sa}
\end{equation}
Moreover, adding $L>|\Sa|$ equations of (\ref{eq:p_n_synthesized}) for the users with indices $k_1,\ldots,k_L$, and applying (\ref{eq:conditional_prob_must_1}) for every $|\Sa|$-length combination from the set $\{k_1,\ldots,k_L\}$, the whole summation is limited to
\begin{equation}
\binom{N}{|\Sa|}\sum\limits_{l=1}^{L} p_{k_{l}} \leq \binom{N-1}{|\Sa|-1}L+\binom{L}{|\Sa|}(1-|\Sa|),
\end{equation}
which must hold for every combination $(k_1,\ldots,k_L)$ in $\{1,\ldots,N\}$ and for every $L=|\Sa|,\ldots,N$. Hence, the second feasibility condition follows.

\bibliographystyle{IEEEtran}
\bibliography{references}
\end{document}